\documentclass[a4paper]{article}

\usepackage{amsmath,amsthm,amssymb,mathtools}
\usepackage{comment}
\usepackage[T1]{fontenc} 
\usepackage[mathcal]{euscript}
\usepackage{graphicx}
\usepackage[hidelinks]{hyperref}
\usepackage[top=2.8cm,bottom=2.8cm,left=2.8cm,right=2.8cm]{geometry}
\usepackage{tikz}
\usepackage[export]{adjustbox}
\usepackage{appendix}
\usepackage{setspace}
\usepackage{tcolorbox}
\usepackage[normalem]{ulem}
\numberwithin{equation}{section}
\linespread{1.1}

\theoremstyle{plain}
\newtheorem{theorem}{Theorem}[section]
\newtheorem{corollary}{Corollary}[theorem]
\theoremstyle{definition}

\makeatletter
\newcommand*{\rom}[1]{\expandafter\@slowromancap\romannumeral #1@}
\makeatother
\makeatletter
\g@addto@macro\bfseries{\boldmath}
\makeatother

\usetikzlibrary{shapes.geometric,decorations.markings,intersections,calc,positioning}

\tikzset{->-/.style={decoration={
  markings,
  mark=at position .6 with {\arrow{>}}},postaction={decorate}}}

\tikzset{-<-/.style={decoration={
  markings,
  mark=at position .6 with {\arrow{<}}},postaction={decorate}}}

\tikzset{%
    add/.style args={#1 and #2}{
        to path={%
 ($(\tikztostart)!-#1!(\tikztotarget)$)--($(\tikztotarget)!-#2!(\tikztostart)$)%
  \tikztonodes},add/.default={.2 and .2}}
}  

\tikzset{
    extended line/.style={shorten >=-#1,shorten <=-#1},
    extended line/.default=1cm]
}

\tikzset{line through/.style args={#1 parallel to line through #2 and #3 and
length #4}{insert path={%
let \p1=($(#3)-(#2)$),\n1={atan2(\y1,\x1)} in (#1) -- ++ (\n1:#4)}}}

\definecolor{col1}{rgb}{0.4, 0.69, 0.2}
\definecolor{col2}{rgb}{0.96, 0.29, 0.54}

\definecolor{green(ryb)}{rgb}{0.4, 0.69, 0.2}
\definecolor{frenchrose}{rgb}{0.96, 0.29, 0.54}
\definecolor{persianblue}{rgb}{0.11, 0.22, 0.73}
\definecolor{jade}{rgb}{0.0, 0.66, 0.42}
\definecolor{limegreen}{rgb}{0.2, 0.8, 0.2}


\title{Quadratic Bureau-Guillot  
systems  with the first and second Painlev\'e transcendents  in the coefficients. \\ Part I:   geometric approach and  birational equivalence}
\author{Marta Dell'Atti{\color{jade}$\,^1$} \\[-.5ex] \small{\url{m.dell-atti@uw.edu.pl}} \\[1ex] Galina Filipuk{\color{jade}$\,^1$} \\[-.5ex]  \small{\url{g.filipuk@uw.edu.pl}} \\[1ex]
{\color{jade}$\,^1$}{ \small University of Warsaw, Institute of Mathematics, 
}  \\[.1ex] {\small Banacha 2, Warsaw, 02-097, Poland}
}

\date{}

\begin{document}

\maketitle

\begin{abstract}
Bureau proposed a classification of systems of quadratic differential equations in two variables which are free of movable critical points, which was recently revised by Guillot. We revisit the quadratic Bureau-Guillot systems with the first and second Painlev\'e transcendent in the coefficients. We explain  their birational equivalence by using the geometric approach of Okamoto's spaces of initial conditions and the method of iterative polynomial regularisation, solving the Painlev\'e equivalence problem for the Bureau-Guillot systems with non-rational meromorphic coefficients. {We explicitly determine the Hamiltonian functions associated with the systems, both for the cases where the system is Hamiltonian with respect to the standard $2$-form and those for which are Hamiltonian with respect to the pull back induced by certain changes of variables.} We also find that one of the systems related to the second Painlev\'e equation can be transformed into a Hamiltonian system via the iterative polynomial regularisation. 
\end{abstract}

{\bf Key words:} Painlev\'e property, spaces of initial conditions, regularisation, birational transformations, Hamiltonians

{\bf MSC 2020:} 34M55







    



\section{Introduction}

It is well-known that certain linear second order ordinary differential  equations, like the Lam\'e or Heun equation \cite{Heun}, are endowed with the so-called elliptic form, i.e.\ they can be written as linear equations with elliptic coefficients. At the end of the nineteenth century, Halphen studied a sequence of higher-order linear equations with doubly periodic coefficients having the property that quotients of solutions are single valued \cite{Halphen}. This result was then generalised in~\cite{Pickering}, where the authors considered equations with the first Painlev\' e transcendent ($\text{P}_{\text{I}}$) in the coefficients. In \cite{Pickering}  higher order systems of nonlinear equations with the Painlev\'e property in the coefficients were also obtained. For a non-linear system to possess the Painlev\'e property means that its solutions are free from movable critical points, admitting only poles as movable singularities. 
The Painlev\'e equations as well can be represented in an elliptic form, starting from the sixth Painlev\'e equation ($\text{P}_{\text{VI}}$) and proceeding via successive coalescences to obtain the form for the remaining $\text{P}_{\text{V}}$, $\dots$, $\text{P}_{\text{I}}$  (see~\cite{Manin, P6 ell,Takasaki} and the references therein). The relevance of studying the differential equations with special functions in the coefficients -- although such forms could be more complicated than equations with rational coefficients -- lies on the possibility to explore new connections with other areas of mathematics (e.g.\ algebraic geometry, number theory, and so on). 


In a series of papers (see \cite{Bureau table} and references in \cite{Guillot}), following the classical works of Painlev\'e and Gambier, Bureau classified quadratic differential equations in two variables endowed with the Painlev\'e property. This classification, summarized in the table in \cite{Bureau table}, was recently revised and extended by Guillot in~\cite{Guillot}, where the author also investigated the birational equivalence among some of the systems in the list \cite[Proposition 2]{Guillot}. Bureau as well already mentioned some of the birational transformations among the systems, using elaborate changes of variables between pairs of differential equations. 

In this work we are interested to explain the above mentioned birational transformations in a unified way using the geometric theory of the Painlev\'e equations. With this aim, we study some of the Bureau-Guillot non-autonomous systems of the form 
\vspace*{-1ex}
\begin{equation}
    \begin{cases}
        y'=F\!\big(y(x),z(x);x\big)  \\[1ex]
        z'=G\!\big(y(x),z(x);x\big)  
    \end{cases}
\end{equation}
where $x \in \mathbb{C}$, the field variables $(y(x),z(x))$ have values in $\mathbb{C}^2$, and $F$, $G$ are polynomial functions in the field variables $(y,z)$ with meromorphic coefficient functions in $x$, not necessarily rational. In particular, we focus on those systems which are related in some way to the first Painlev\'e equation ($\text{P}_{\text{I}}$) and the second Painlev\'e equation ($\text{P}_{\text{II}}$), that we recall here: 
\begin{align}
    \text{P}_{\text{I}} &\colon y''= 6y^2+ x \label{eq:P1} \\[1ex]
    \text{P}_{\text{II}} &\colon y''= 2y^3 + yx + \alpha \label{eq:P2}
\end{align}
where $y=y(x)$ and $\alpha \in \mathbb{C}$ is a constant. 
We use the notation fixed by Guillot in the labelling of the quadratic systems in~\cite{Guillot}, and we will refer to these as Bureau-Guillot systems. In particular, we will focus on the systems V, IX.B(2), IX.B(3), IX.B(5), XIII and XIV. Sometimes we shall use a suitable labelling of the variables to refer to a particular system to avoid confusion, especially when directly compared, e.g.~$(z_5,y_5)$ for the field variables satisfying the system~V and~$(z_{92},y_{92})$ for the system~IX.B(2).  

We will establish the birational equivalences of the systems in the list by using two different approaches, namely the geometric approach constructing the Okamoto  space of initial conditions associated with the systems \cite{Okamoto} and generalised by Sakai \cite{Sakai}, and the iterative polynomial regularisation introduced in \cite{GF}. For details on the geometric approach see \cite{Okamoto, KNY, DFS, DFLS, MDTK2}.
There are several interesting questions we would like to address in this and subsequent papers. In this work (Part I), we investigate why it is possible to establish a birational equivalence between certain pairs of the selected Bureau-Guillot systems mentioned above. Another interesting question related to the selected Bureau-Guillot systems pertains the value distribution theory, and in particular the study of the behaviour of solutions, especially those with non-rational meromorphic coefficient functions. This is the focus of the Part II \cite{ECGF}.
Here, we tackle the problem of the existence of birational transformations by examining the geometry behind the systems put in relation, and show that they share the same surface type (in the sense of the Okamoto-Sakai theory of Painlev\'e equations) by exploiting the geometric approach described below. This allows us to explicitly reproduce the above mentioned birational transformations, which in some cases can additionally be obtained via the method of the iterative regularisation. 

Not all the systems studied in this work are Hamiltonian. For those which are Hamiltonian, we will explicitly write its polynomial expression also specifying its genus. This is a slight abuse of terminology, since whenever we write ``Hamiltonian of genus $g$'' we refer to the genus associated with the corresponding Newton polygon, following \cite{MDTK2}. A polynomial Hamiltonian defines an algebraic curve in~$\mathbb{P}^2$ with which we can associate a Newton polygon, given by the convex hull of the set of points $(j,k)$ with $\beta_{jk} \neq 0$, e.g.
\vspace*{-3ex}
\begin{equation}\label{eq:ham_example}
   \begin{aligned}
H(z,y)&= \sum_{jk}\beta_{jk}(x)\,z^jy^k\,, \\[1ex]
H(z,y)&=z^3y+\beta_{30}\,y^3+\beta_{20}\,z^2+\beta_{11}\,zy+\beta_{02}\,y^2+\beta_{10}\,z+\beta_{01}\,y
\end{aligned} \qquad \includegraphics[width=.15\textwidth,valign=c]{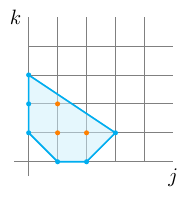} 
\end{equation}
The genus of the curve corresponds to the number of internal points of the Newton polygon, hence e.g.\ the Hamiltonian in~\eqref{eq:ham_example} has genus 3. 
As shown in~\cite{new}, the genus and the area of the Newton polygon might change during the (iterative) polynomial regularisation, and the study of different shapes of Newton polygons can be helpful when approaching the Painlev\'e equivalence problem.

The paper is organised as follows. In Section~\ref{sec:geom_iter} we describe the main approaches used to analyse the systems, i.e.\ the geometric approach and the iterative regularisation method. In Section~\ref{sec:P1} we study in detail the geometry of the Bureau-Guillot systems underpinning $\text{P}_{\text{I}}$, namely the systems~V, IX.B(2), IX.B(5) and XIII as listed in~\cite{Guillot}. We associate with the systems a specific surface type and we find the suitable birational transformations between pairs of systems. For the system IX.B(2) we determine the Hamiltonian form with respect to the non-standard $2$-form induced by the relation with the variables for the system V. In Section~\ref{sec:P2} we analyse the Bureau-Guillot systems underpinning $\text{P}_{\text{II}}$, i.e.\ the system IX.B(3) and XIII and establish the birational transformation relating them. {We provide a description of the system XIII being Hamiltonian with respect to the pull back of the $2$-form expressed in terms of the variables solving the system IX.B(3).} For the system XIII we also find via the iterative regularisation a birationally related Hamiltonian system. The Section~\ref{sec:disc} is devoted to some discussion and to explore several interesting open questions.

\section{Geometric approach and iterative regularisation method}\label{sec:geom_iter}
We follow the approach established in \cite{DFLS}, based on the Okamoto-Sakai construction, to connect several Hamiltonian systems associated with the fourth Painlev\'e equation  $\text{P}_{\text{IV}}$ and find the suitable changes of coordinates and parameters relating pairs of systems. The method has been successfully applied to several problems (see e.g.~\cite{DFS,MDTK1}), and we use it here to establish birational transformations between pairs of related systems.    
In our case, we deal with systems (either Hamiltonian or not Hamiltonian) underpinning either $\text{P}_{\text{I}}$ or $\text{P}_{\text{II}}$. 

We identify the space of initial conditions as it was introduced by Okamoto \cite{Okamoto} by considering first an extension of the affine space for the variables $(y,z)\in \mathbb{C}^2$ to include points at infinity (i.e.~$\mathbb{P}^2$) and then removing from it the configuration of vertical leaves obtained in the regularisation process. The minimal configuration of vertical leaves takes the name of \emph{surface type} and it is represented by an extended Dynkin diagram (see e.g.~\cite{Matano}). We briefly review this procedure to fix the notation and the compactification we use throughout the paper.


Among the possible compactifications usually considered for this problem (i.e.\ $\mathbb{P}^2$, $\mathbb{P}^1 \times \mathbb{P}^1$, or a Hirzebruch surface) we choose to consider $\mathbb{P}^2$, to deal with the simplest form of the expressions for the change of coordinates (one type of curve to start with instead of at least two). 
The compactified space $\mathbb{P}^2$ is given by gluing together three affine charts identified by $(z,y)$, $(u_0,v_0)$ and $(U_0,V_0)$, related by homogeneous coordinates:
\vspace*{-3ex}
\begin{equation}\label{eq:CP2}
\includegraphics[width=.23\textwidth,valign=c]{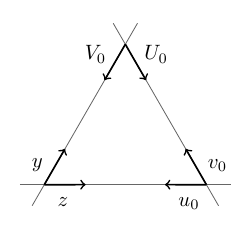} \qquad   
    \begin{aligned}
    &\mathbb{P}^2 = \mathbb{A}_{(z,y)} \cup  \mathbb{A}_{(u_0,v_0)} \cup \mathbb{A}_{(U_0,V_0)}\,, \\[1.4ex] 
    &[\,1:z:y\,] = [\,u_0:1:v_0\,] = [\,V_0:U_0:1\,]\,, \\[.7ex]
    &u_0 = \frac{1}{z}\,, \qquad V_0 = \frac{1}{y}\,, \qquad v_0 = \frac{y}{z} = \frac{1}{U_0} \,.
\end{aligned}
\end{equation}

\vspace*{-2ex}

\noindent 
In $\mathbb{P}^2$, the system is viewed by all respective charts ($(z,y)$, $(u_0,v_0)$ and $(U_0,V_0)$), may have points of indeterminacy, i.e.\ points at which the vector field has the ill-defined form $0/0$. Some of these points are \emph{base points}, i.e.\ points at which infinitely many flow curves of the vector field coalesce. The system of equations is regularised through the process of blowing-up the phase space at these base points, giving rise to an exceptional curve for each blow-up. In a coordinate chart $(u_i,v_i)$, a blow-up at a point $p\colon (u_i,v_i)=(a,b)$ is performed by 
\begin{equation}
  \text{Bl}_p(\mathbb{C}^2) = \{ (u_i,v_i) \times [w_0:w_1] \in \mathbb{C}^2 \times \mathbb{P}^1: (u_i-a) w_0 = (v_i-b) w_1 \}\,,
\end{equation}
giving rise to two new coordinate charts, namely
$(u_j,v_j)$, $(U_j,V_j)$, according to 
\begin{equation}\label{eq:blowup_coords}
\begin{cases}
    u_i = u_j+a = U_j V_j+a,  \\[1ex] 
    v_i = u_j\, v_j+b = V_j+b.
\end{cases}.
\end{equation}
Here, $U_j = {w_0}/{w_1}$ covers the part of $\mathbb{P}^1$ modulo where $w_1 \neq 0$, while $v_j = {w_1}/{w_0}$ covers the part where $w_0 \neq 0$. The projection onto the first component is then
\begin{equation}
    \pi_p : \text{Bl}_p(\mathbb{C}^2) \to \mathbb{C}^2, \quad (u_i,v_i) \times [w_0,w_1] \mapsto (u_i,v_i).
\end{equation}
Away from the point $p$, the blow-up is a one-to-one map,
\begin{equation}
    \text{Bl}_p(\mathbb{C}^2) \setminus \pi_p^{-1}(p) \quad \longleftrightarrow \quad \mathbb{C}^2 \setminus \{ p\}.
\end{equation}
The set $E = \pi_p^{-1}(p)$ is the exceptional curve introduced by the blow-up which as a point set in coordinates is given by 
\begin{equation}
    E = \{ u_j = 0 \} \cup \{ V_j = 0 \}.
\end{equation}
After a blow-up, new base points can arise on the exceptional curve and usually, to completely remove the indeterminacy, the composition of finitely many successive changes of coordinates is required.

Following \cite{Sakai} the divisors on the phase space give rise to the Picard group of the $n$-times blown up space~$\mathcal{X}_n$, 
\begin{equation}
    \text{Pic}(\mathcal{X}_n) = \text{Span}_{\mathbb{Z}} \{ \mathcal{H} , \mathcal{E}_1, \dots,  \mathcal{E}_n\}\,,
\end{equation}
where $\mathcal{H}$ in~\eqref{eq:CP2} is the hyperplane coordinates divisor in $\mathbb{P}^2$, and $\mathcal{E}_i$ are the exceptional curves introduced by the $n$ blow-ups. By definition, exceptional curves have self-intersection $-1$, while $\mathcal{H}$ itself has self-intersection $+1$. The anti-canonical divisor class is then 
\begin{equation}\label{eq:anticanon}
    -\mathcal{K}_{\mathcal{X}_n} = 3\,\mathcal{H} - \mathcal{E}_1 - \dots - \mathcal{E}_{n}\,,
\end{equation}
and surfaces of different
types are associated with different configurations of irreducible components of the anti-canonical divisor. 
The linear extension of the self-intersection relations gives rise to the intersection form
\begin{equation}\label{eq:intersection}
\mathcal{H} \cdot \mathcal{H} = 1, \qquad \mathcal{E}_i \cdot \mathcal{H} = 0 \quad\forall i, \qquad \mathcal{E}_i \cdot \mathcal{E}_j = -\delta_{ij}\,. \end{equation}
Crucial in the construction of the configuration is considering the effect that the blow-up transformation has on the curves involved. We distinguish two cases: 
\begin{enumerate}
    \item The point of indeterminacy lies on the intersection of two irreducible components of the inaccessible divisor $L_1$ and $L_2$
    
    \vspace*{-4ex}
    
    \begin{equation*}
        \includegraphics[width=.58\textwidth]{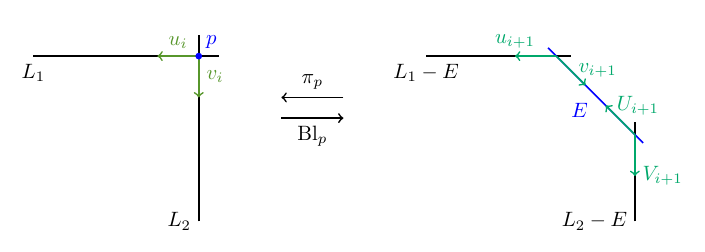}
    \end{equation*}
    
    \vspace*{-2ex}
    
    In this case, the base point is at the origin of the chart $(u_i,v_i)=(0\,,0)$ and not visible in other charts. After the blow-up, the irreducible divisor components of the total transform $\hat{L}_1 \cup \hat{L}_2$ are $\hat{L}_1 - E$, $E$ and $\hat{L}_2 - E$, with $E$ exceptional curve. The notation $L_i-E$ stands for the strict transform of $L_i$ under the blow-up, i.e.\ 
    \begin{equation*}
        L_i-E = \overline{\pi_p^{-1}\{L_i \setminus\{p\}\}} \,.
    \end{equation*}
    In the following we will simply write $L_i-E$ by abuse of notation. 

    \item The point of indeterminacy is not an intersection point of inaccessible divisor components 
     
    \vspace*{-2ex}
    
    \begin{equation*} 
        \includegraphics[width=.58\textwidth]{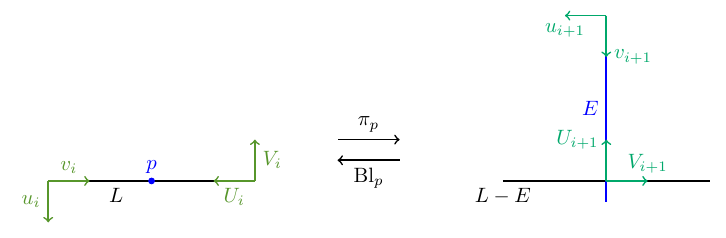}
    \end{equation*}
    
    \vspace*{-2ex}
    
\noindent    hence, it will be visible in both charts 
    $$(u_i,v_i) = (0\,,a)\,, \qquad (U_i,V_i) = (1/a,0)\,, \quad a \in \mathbb{C}\,.$$
\end{enumerate}

\vspace*{-2ex}

\noindent 
In both cases the effect of the blow-up transformation on the curve (or curves), is that their self-intersection value changes. For instance, a curve $H$ of the class $\mathcal{H}$ has self-intersection $+1$, but after a blow-up of a point on $H$, the self-intersection of the curve $H-E$ (with $E\in \mathcal{E}$) is determined by using~\eqref{eq:intersection}, i.e.\
\begin{equation}
    (H-E) \cdot (H-E) = H \cdot H - H\cdot E - E \cdot H + E \cdot E = 0\,.
\end{equation}
As already mentioned, usually the process of regularisation occurs in several steps, hence after a blow-up it is usual to find a new point,  blowing up  which produces  the exceptional curve, say $E_i$, on the rational surface. After the subsequent blow-up   an exceptional curve, say $E_j$, intersects $E_i$. The self intersection of the curve $E_i - E_j$ is again evaluated making use of the intersection product~\eqref{eq:intersection}
\begin{equation}\label{eq:example_2}
    (E_i - E_j)\cdot (E_i - E_j)= E_i\cdot E_i - E_i\cdot E_j - E_j\cdot E_i + E_j \cdot E_j = -2\,. 
\end{equation}
By tracking   the subsequent points at which the surface is blown up, we can visualise the configuration of the irreducible components of the anti-canonical divisor~\eqref{eq:anticanon} representing the surface type associated with the system. In particular, the Dynkin diagram $\delta=-\mathcal{K}_{\mathcal{X}_n}$ can be identified by considering the curves with self-intersection~$-2$. In the following, the diagrams associated with the different configurations of intersecting curves are drawn with different colours, indicating different self-intersections:
\vspace*{-1ex}
\begin{equation*}
\begin{tikzpicture}[scale=.9]
    \draw[thick,jade] (0,0) -- ++(1,0) node[right] {\small $-2$-curve};
    \draw[thick,blue!80!black] (4,0) -- ++(1,0) node[right] {\small $-1$-curve};
    \draw[thick,gray!80!black] (8,0) -- ++(1,0) node[right] {\small $\ge 0$-curve};
\end{tikzpicture}
\end{equation*}

\vspace*{-2ex}

\noindent 
and e.g.\ the curve in~\eqref{eq:example_2} would be represented by a green line.  
To each $-2$-curve corresponds a node, and two nodes are connected by an edge if the corresponding curves intersect. The $k$-th node in the diagram also represents $\delta_k$, the class of the irreducible component of the inaccessible divisor. The systems treated in this paper will be associated with the extended Dynkin diagrams~$E_7^{(1)}$ and~$E_8^{(1)}$, i.e.  
\begin{equation*}
    \includegraphics[width=.3\textwidth,valign=c]{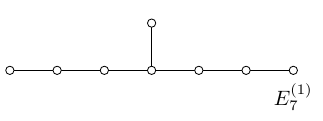} \hspace{5ex} \includegraphics[width=.32\textwidth,valign=c]{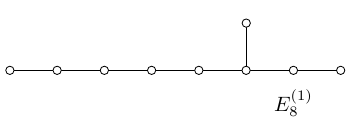}
\end{equation*}
respectively related to $\text{P}_{\text{II}}$ and $\text{P}_{\text{I}}$.

Once we have a minimal intersection diagram associated with a pair of systems -- say in coordinates of the original affine chart $(z,y)$ and $(w,t)$ respectively -- we can proceed to matching the irreducible components by identifying pairs of curves~\cite{DFLS}. This identification allows us to establish a preliminary form of the functional expression for $(z,y)$ in terms of $(w,t)$, and vice versa. This is accomplished by looking for the form of the axes (e.g.\ the sets $\{z=0\}$ and $\{y=0\}$ for the first system). In the following we will see cases where the functional relation between coordinates is either linear, quadratic or at most cubic, as expected at least for systems which are Hamiltonian associated with the Painlev\'e equations \cite{KMNOY}. At this stage, not all the coefficient functions at all orders are fixed, and to completely determine them we require that the variables  $(z(w,t),y(w,t))$ given as functions of the coordinates~$(w,t)$ still satisfy the system for $(z,y)$. 

The second method we use to look for birational transformations is the iterative regularisation, as introduced in \cite{GF}, which focuses more on the analytical aspects of the problem. The regularisation of the system is conducted as described above, via chains of blow-up transformations originating at each point of indeterminacy in~$\mathbb{P}^2$. If the system possesses the Painlev\'e property, it is regularised after a finite number of blow-ups for each cascade. Since there might be more than one cascade of blow-ups to be considered we obtain several versions of the regularised system, one per each tail of the chains of transformations of coordinates. The idea behind the method is to iteratively regularise the systems as they appear in the final affine charts. We promote each of the systems in the final charts to be the original system, introduce a suitable compactification and we consider again the subsequent points of indeterminacy to find new regularised versions of it. In this way the algorithmic procedure of the iterative regularisation produces a tree where the root node is the original system, and the leaves are the different versions of the regularised systems. Along some of these ramifications, we can produce a significantly simpler form of the system compared to that of the original one, or -- as we will see below -- we can recognise at some step a different system to which our original one is related via a birational transformation. This approach alongside with the analysis of the Newton polygons associated with polynomial Hamiltonian systems is deepened in \cite{new}. 

In the following, the geometric approach and the identification of the irreducible components of the anti-canonical divisors will be crucial to establish the birational transformations to relate pairs of the Bureau-Guillot systems among V, IX.B(2), IX.B(5) and XIV for $\text{P}_{\text{I}}$, and to relate the Bureau-Guillot systems IX.B(3) and XIII for $\text{P}_{\text{II}}$. The birational transformations relating the systems IX.B(2) and IX.B(5) and the systems IX.B(3) and XIII are also obtained via the iterative regularisation.

\section{\texorpdfstring{Bureau-Guillot systems related to $\text{P}_{\text{I}}$}{P1}}\label{sec:P1}

All the systems related to the first  Painlev\'e equation admit a surface type represented by the extended Dynkin diagram $E_8^{(1)}$ with the labelling for the components $\delta_i$ as 
\vspace*{-2ex}
\begin{equation}\label{eq:diag_E8}
    \includegraphics[width=.37\textwidth,valign=c]{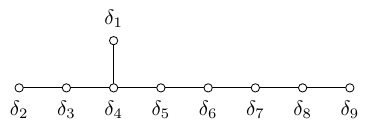}
\end{equation}
Each system is characterised by a specific configuration of $\delta_i$  
obtained during the regularisation process. In order to obtain the transformations of variables between systems  
we compare these configurations.    

\subsection{System V}
The system V, which will be our reference system for  finding birational transformations with meromorphic coefficients to other Bureau systems in case of the first Painlev\'e transcendent, reads as follows:
\begin{equation}\label{eq:syst_5}
\begin{cases}
    y' = z,\\[1ex]
z' = 6y^2 + f(x),
\end{cases} \qquad f''(x) = 0\,.
\end{equation}
We consider the case when $f(x)=x$ so that the function $y$ satisfies the standard first Painlev\'e equation~\eqref{eq:P1}.  
The system V in~\eqref{eq:syst_5} is Hamiltonian with the Hamiltonian function of genus 1 and area equal to 2
\vspace*{-3ex}
\begin{equation}
H_{5}(z,y)=\frac{z^2}{2}-y \left(x+2 y^2\right) \hspace{10ex} \includegraphics[width=.16\textwidth,valign=c]{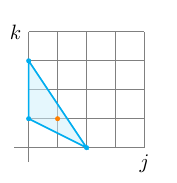}   
\end{equation}

\vspace*{-2ex}

\noindent
A  polynomial regularisation of the system~\eqref{eq:syst_5} is  described in \cite{Iwasaki P1}  and it requires an intermediate 2-fold covering transformation.  The rational regularisation can be realised without using the intermediate 2-fold covering \cite{Okamoto, DJ, KNY}.  

The rational regularisation of the system V produces a single chain of transformations of variables, the cascade with $9$ blow-ups from $\mathbb{P}^2$ for $\text{P}_{\text{I}}$, originating at $(u_0,v_0)$
\begin{equation}
\vcenter{\hbox{
	\begin{tikzpicture}
		\begin{scope} 
			\path (0,0) coordinate (a0) -- ++(3,0)  coordinate (a1) -- ++(3,0)  coordinate (a2) -- ++(3,0)  coordinate (a3) -- ++(0,-1)  coordinate (a4)  -- ++(0,-1)  coordinate (a5) -- ++(-3,0)  coordinate (a6) -- ++(-3.2,0)  coordinate (a7) -- ++(-3.4,0)  coordinate (a8);
			\node (b0) at (a0) {$(u_0,v_0)=(0\,, 0)$};
			\node (b1) at (a1) {$(U_1,V_1)=(0\,, 0)$};
			\node (b2) at (a2) {$(U_2,V_2)=(0\,, 0)$};
			\node (b3) at (a3) {$(U_3,V_3)=( 4\,, 0)$};
			\node (b4) at (a4) {$(U_4,V_4)=(0\,, 0)$};
			\node (b5) at (a5) {$(U_5,V_5)=(0\,, 0)$};
			\node (b6) at (a6) {$(U_6,V_6)=(0\,, 0)$};
			\node (b7) at (a7) {$(U_7,V_7)=(32\,x\,, 0)$};
            	\node (b8) at (a8) {$(U_8,V_8)=( -64\,, 0)$};
			\draw[<-,thick] (b0) -- (b1);
			\draw[<-,thick] (b1) -- (b2);
			\draw[<-,thick] (b2) -- (b3);
			\draw[<-,thick] (b3) -- (b4);
			\draw[<-,thick] (b4) -- (b5);
			\draw[<-,thick] (b5) -- (b6);
			\draw[<-,thick] (b6) -- (b7);
                \draw[<-,thick] (b7) -- (b8);
		\end{scope} 
	\end{tikzpicture}}}
\end{equation}
The configuration of the irreducible components of the anti-canonical divisor for the system is then given in terms of the classes of divisors $(\mathcal{H},\mathcal{E}_i)$ with $i=1, \dots, 9$, i.e.\ 
\begin{equation}\label{eq:syst_5_diagram}    \includegraphics[width=.43\textwidth,valign=c]{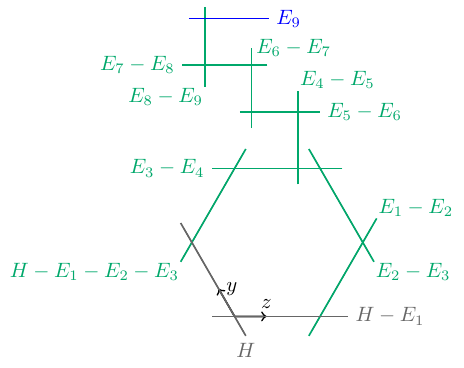} \qquad \hspace{10ex} 
    \begin{array}{l} 
       \delta_1 = \mathcal{H}-\mathcal{E}_1 - \mathcal{E}_2 - \mathcal{E}_3  \\[.5ex]
       \delta_2 =  \mathcal{E}_1 - \mathcal{E}_2  \\[.5ex]
       \delta_3 =  \mathcal{E}_2 - \mathcal{E}_3  \\[.5ex]
       \delta_4 =  \mathcal{E}_3 - \mathcal{E}_4    \\[.5ex]
       \delta_5 =   \mathcal{E}_4 - \mathcal{E}_5  \\[.5ex]
       \delta_6 =   \mathcal{E}_5 - \mathcal{E}_6   \\[.5ex]
       \delta_7 =   \mathcal{E}_6 - \mathcal{E}_7  \\[.5ex]
       \delta_8 =   \mathcal{E}_7 - \mathcal{E}_8    \\[.5ex]
       \delta_9 =   \mathcal{E}_8 - \mathcal{E}_9    \\[.5ex]
    \end{array}
\end{equation}
The diagram is determined starting from $\mathbb{P}^2$~\eqref{eq:CP2} and algorithmically proceeding by identifying the points at which blow-up the surface as stated in the previous section. The rational surface shown in~\eqref{eq:syst_5_diagram} will be the reference diagram for the other systems underpinning $\text{P}_{\text{I}}$, and in the following we will look for transformations of coordinates for the function variables $(z_5(x),y_5(x))$ satisfying the system V in terms of the coordinates satisfying the other systems.

\subsection{\texorpdfstring{System IX.B(2) and Bureau Hamiltonian $H_{\text{Bu92}}$}{H92}}
The system IX.B(2) is given by\footnote{See also equation (10) in \cite{Bureau table}.}
\begin{equation}\label{eq:syst_92}
\begin{cases} 
y' = -\,y^2 + z + 12\,q(x), \\[1ex]
z' = yz,
\end{cases} \qquad (q''(x) - 6\,q^2(x))'' = 0\,.
\end{equation}
We assume that the function $q(x)$ in the coefficient  satisfies the first Painlev\'e equation~\eqref{eq:P1}.  The system is non-Hamiltonian with respect to the standard 2-form $dy\wedge dz$ but changing the coefficients at $y^2$ in the first equation to $y^2/2$,   makes the system Hamiltonian with genus 0 Hamiltonian (or genus 1 if we consider an additional term depending on the independent variable $x$ but not on $y$ and $z$). We shall explore this case in a subsequent paper in more details.

The relation connecting the system IX.B(2) in the coordinates $(z_{92},y_{92})$ to system V in $(z_{5},y_{5})$ can be found in \cite[eq.\ (25)]{Bureau table}. Namely, one can easily check that the function  $z_{92}(x)$ satisfies the equation 
\begin{equation}\label{*}
z_{92}''=z_{92}^2+12 \,q\, z_{92}.
\end{equation}
Taking 
\begin{equation}\label{**}
z_{92}=6(y_5-q)\,, \qquad  y_{92}=\frac{z_{92}'}{z_{92}}=\frac{z_5-q'}{y_5-q}
\end{equation}
one can find that $$\frac{q''-y_5''}{q-y_5}=6(q+y_5)$$ by substituting~\eqref{**} into the second order differential equation for $z_{92}$~\eqref{*}. Hence, if both $y_5$ and $q$ are two different solutions of the first Painlev\'e equation~\eqref{eq:P1}, then the last equation   is trivial. The inverse transformation to~\eqref{**} is given by 
\vspace*{-1ex}
\begin{equation}\label{eq:change_5_92}
    y_5=\frac{z_{92}}{6}+q,\qquad z_5=\frac{y_{92}\,z_{92}}{6} + q'.
\end{equation}

We now consider the geometric approach via the regularisation of the system IX.B(2), for which we do not consider any assumption on the function $q(x)$. The fact that $q(x)$ satisfies $\text{P}_{\text{I}}$ is found as a condition for the system to be regularised along the last exceptional curves. 
The rational regularisation for the system IX.B(2) occurs by means of two cascades of blow-ups, i.e.\
\vspace*{-1ex}
\begin{equation}\label{eq:syst_92_cascades}
\vcenter{\hbox{
    \begin{tikzpicture}
		\begin{scope} 
			\path (0,0) coordinate (a0) -- ++(3.3,0)  coordinate (a1) -- ++(3.3,0)  coordinate (a2) -- ++(3.3,0)  coordinate (a3) -- ++(0,-1)  coordinate (a4)  -- ++(-4.3,0)  coordinate (a5) -- ++(0,-1)  coordinate (a6) -- ++(0,-1)  coordinate (a7);
			\node (b0) at (a0) {$(u_0,v_0)=(0\,, 0)$};
			\node (b1) at (a1) {$(U_1,V_1)=(0\,, 0)$};
			\node (b2) at (a2) {$(\tilde{u}_2,\tilde{v}_2)=(0\,,\tfrac{3}{2})$};
			\node (b3) at (a3) {$(u_3,v_3)=(0\,, 0)$};
			\node (b4) at (a4) {$(u_4,v_4)=(0\,,-18\,q(x))$};
			\node (b5) at (a5) {$(u_5,v_5)=(0\,,12\,q'(x))$};
			\node (b6) at (a6) {$(u_6,v_6)=(0\,,-12\,q''(x))$};
			\node (b7) at (a7) {$(u_7,v_7)=(0\,,24(q'''(x)-6\,q(x)q'(x)))$};
			\draw[<-,thick] (b0) -- (b1);
			\draw[<-,thick] (b1) -- (b2);
			\draw[<-,thick] (b2) -- (b3);
			\draw[<-,thick] (b3) -- (b4);
			\draw[<-,thick] (b4) -- (b5);
			\draw[<-,thick] (b5) -- (b6);
			\draw[<-,thick] (b6) -- (b7);
		\end{scope} 
        \begin{scope}
            \path (0,-4) coordinate (a0) -- ++(3.3,0)  coordinate (a1);
            \node (b0) at (a0) {$(U_0,V_0)=(0\,, 0)$};
		\node (b1) at (a1) {$(U_{8},V_{8})=(0\,, 0)$};
            \draw[<-,thick] (b0) -- (b1);
        \end{scope}
	\end{tikzpicture}}}
\end{equation}
and the configuration of inaccessible divisors is expressed in terms of the elements  $(\mathcal{K},\mathcal{F}_i)$ with $i=1, \dots, 10$. The configuration of $-2$-curves following the labelling in~\eqref{eq:diag_E8} is 
\begin{equation}\label{eq:syst_92_diag} \includegraphics[width=.45\textwidth,valign=c]{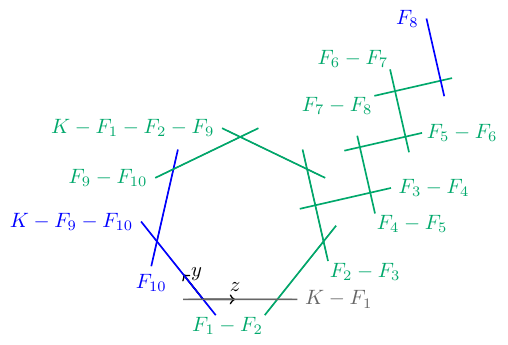} \qquad \hspace{10ex} 
    \begin{array}{l} 
       \delta_1 = \mathcal{F}_1 - \mathcal{F}_2  \\[.5ex]
       \delta_2 =  \mathcal{F}_9 - \mathcal{F}_{10}  \\[.5ex]
       \delta_3 =  \mathcal{K} - \mathcal{F}_1 - \mathcal{F}_2 - \mathcal{F}_9  \\[.5ex]
       \delta_4 =  \mathcal{F}_2 - \mathcal{F}_3    \\[.5ex]
       \delta_5 =  \mathcal{F}_3 - \mathcal{F}_4  \\[.5ex]
       \delta_6 =   \mathcal{F}_4 - \mathcal{F}_5   \\[.5ex]
       \delta_7 =   \mathcal{F}_5 - \mathcal{F}_6  \\[.5ex]
       \delta_8 =  \mathcal{F}_6 - \mathcal{F}_7    \\[.5ex]
       \delta_9 =   \mathcal{F}_7 - \mathcal{F}_8   \\[.5ex]
    \end{array}
\end{equation}
Following \cite{DFLS} in order to obtain a matching between the divisors for the two cases, we consider a further blow-up in the system V. This allows us to obtain the following correspondence: 
\begin{equation*}
    \begin{array}{l l}
        \mathcal{E}_1 = \mathcal{K} - \mathcal{F}_1 - \mathcal{F}_{10}  &   \mathcal{F}_1 = \mathcal{H} - \mathcal{E}_1 - \mathcal{E}_2  \\
        \mathcal{E}_2 = \mathcal{K} - \mathcal{F}_1 - \mathcal{F}_9 &    \mathcal{F}_2 =  \mathcal{E}_3\\
        \mathcal{E}_3 = \mathcal{F}_2 &  \mathcal{F}_3 = \mathcal{E}_4 \\
        \mathcal{E}_4 = \mathcal{F}_3 &  \mathcal{F}_4 = \mathcal{E}_5\\
        \mathcal{E}_5 = \mathcal{F}_4 & \mathcal{F}_5 = \mathcal{E}_{6}\\
        \mathcal{E}_6 = \mathcal{F}_5 & \mathcal{F}_6 = \mathcal{E}_7  \\
        \mathcal{E}_7 = \mathcal{F}_6 & \mathcal{F}_7 = \mathcal{E}_8 \\
        \mathcal{E}_8 = \mathcal{F}_{7} & \mathcal{F}_8 =\mathcal{E}_9 \\
        \mathcal{E}_9 = \mathcal{F}_{8}  & \mathcal{F}_9 = \mathcal{H} - \mathcal{E}_2 - \mathcal{E}_{10} \\
        \mathcal{E}_{10} = \mathcal{K} - \mathcal{F}_9 - \mathcal{F}_{10} & \mathcal{F}_{10} = \mathcal{H} - \mathcal{E}_1 - \mathcal{E}_{10} \\
        \mathcal{H} = 2\,\mathcal{K} - \mathcal{F}_1 - \mathcal{F}_9 - \mathcal{F}_{10} \hspace{10ex} & \mathcal{K} = 2\,\mathcal{H} - \mathcal{E}_1 - \mathcal{E}_2 - \mathcal{E}_{10}
    \end{array}
\end{equation*}
For the change of variables to express the coordinates $(z_5,y_5)$ in terms of the coordinates $(z_{92},y_{92})$ we consider the sets of points representing the coordinates axes for $\mathbb{A}_{(z_5,y_5)}$ in the system V: 
\begin{equation}\label{eq:ax_5_92}
      \begin{split}
        y_{5}&\colon \{z_{5}=0\}  \qquad \mathcal{H} - \mathcal{E}_1  = \mathcal{K} - \mathcal{F}_9\\ 
        z_{5}&\colon \{y_{5}=0\}  \qquad \mathcal{H} = 2\,\mathcal{K} - \mathcal{F}_1 - \mathcal{F}_9 - \mathcal{F}_{10} \\ 
    \end{split}
\end{equation}
The first relation suggests that the coordinate $y_5$ is a linear function in the variables $(z_{92},y_{92})$ intersecting the exceptional curve $\mathcal{F}_9$, i.e.\ we set 
\begin{equation}\label{eq:syst_92_intermediate}
    y_5 \colon a_{10}\,z_{92} + a_{01}\,y_{92}+a_{00} \,,
\end{equation}
with $a_{ij}=a_{ij}(x)$, and we require that this line goes through the point $(U_0,V_0)=(0,0)$ in~\eqref{eq:syst_92_cascades}. We rewrite~\eqref{eq:syst_92_intermediate} in the chart $(U_0,V_0)$, obtaining 
\begin{equation}
    a_{10}\,U_0 + a_{01}+a_{00}\,V_0 = 0\,,
\end{equation}
and imposing that the point $(U_0,V_0)=(0,0)$ is part of the line we find the condition $a_{01}=0$. The second expression in~\eqref{eq:ax_5_92} is a conic in the variables $(z_{92},y_{92})$, i.e.\
\begin{equation}\label{eq:z_5_conic_92}
    z_5 \colon b_{20}\,(z_{92})^2+b_{11}\,z_{92}\,y_{92} + b_{02}\,(y_{92})^2 + b_{10}\,z_{92} + b_{01}\,y_{92} + b_{00}=0 
\end{equation}
intersecting the exceptional divisors $\mathcal{F}_1$, $\mathcal{F}_9$ and $\mathcal{F}_{10}$, or going through the points with coordinates  
\begin{equation}
    (u_0,v_0) =(0,0)\,, \qquad (U_0,V_0) =(0,0)\,, \qquad (U_8,V_8) =(0,0)
\end{equation}
in~\eqref{eq:syst_92_cascades}. Imposing these constraints we fix the value of some of the coefficient functions $b_{ij}(x)$ in~\eqref{eq:z_5_conic_92} 
\begin{equation}
    b_{20} = 0 \,, \qquad b_{02} = 0 \,, \qquad b_{01} = 0\,. 
\end{equation}
At this stage the relations expressing the coordinates $(z_5,y_5)$ in terms of $(z_{92},y_{92})$ are
\begin{equation}
    y_5 = a_{10}\, z_{92}+a_{00} \,, \qquad z_5 = b_{11}\,z_{92}\,y_{92} + b_{10}\,z_{92}+b_{00}\,. 
\end{equation}
Lastly, requiring that the variables satisfy the original systems respectively, we find the change of coordinates~\eqref{eq:change_5_92}. In particular, we find the following constraints on the remaining coefficient functions $b_{ij}(x)$:
\begin{equation}
\begin{split} 
    &b_{00}=a_{00}'\,, \qquad b_{00}'=6\,a_{00}^2+x\,, \qquad b_{10}=a_{10}'\,, \qquad b_{11}=a_{10}\,, \qquad a_{10}=6\,a_{10}^2 \,, \qquad \\[1ex]
    &b_{11}\,q+b_{10}'=a_{00}\,a_{10}\,.
\end{split} 
\end{equation}
Solving these constraints we fix all the remaining coefficient functions, i.e. 
\begin{equation}
   a_{00}=q(x)\,, \qquad a_{10}=b_{11}=\frac{1}{6}\,, \qquad b_{10}=0\,, \qquad b_{00} = q'(x)  
\end{equation}
and inserting them in the general expressions~\eqref{eq:syst_92_intermediate} and~\eqref{eq:z_5_conic_92} we find the change of variables~\eqref{eq:change_5_92}. 

{
As previously mentioned, the system IX.B(2)~\eqref{eq:syst_92} is non-Hamiltonian with respect to the standard canonical $2$-form $dz_{92}\wedge dy_{92}$, while the system V~\eqref{eq:syst_5} is Hamiltonian with respect to $dz_{5}\wedge dy_{5}$. Hence, we can ask whether the system IX.B(2) is Hamiltonian with respect to a non-standard version of the $2$-form given by the pull back of the $2$-form related to the change of variables~\eqref{eq:change_5_92}. We have 
\begin{equation}\label{eq:2form_92}
    \begin{split}
        dz_5 \wedge dy_5 &= \left( \frac{\partial y_5}{\partial z_{92}}\,\frac{\partial z_5}{\partial y_{92}} - \frac{\partial y_5}{\partial y_{92}}\,\frac{\partial z_5}{\partial z_{92}} \right) dz_{92} \wedge dy_{92} = \frac{z_{92}}{36}\,dz_{92} \wedge dy_{92} \,, 
    \end{split}
\end{equation}
and we look for the function $H_{\text{Bu92}}(z_{92},y_{92})$ such that the system IX.B(2)~\eqref{eq:syst_92} admits the following expression: 
\begin{equation}
    \begin{cases}
        y_{92}'= \dfrac{36}{z_{92}}\,\dfrac{\partial H_{\text{Bu92}}}{\partial z_{92}}\,, \\[2ex]
        z_{92}'= -\dfrac{36}{z_{92}}\,\dfrac{\partial H_{\text{Bu92}}}{\partial y_{92}}\,. 
    \end{cases}
\end{equation}
The function $H_{\text{Bu92}}$ exists and reads as
\vspace*{-4ex}
\begin{equation}\label{eq:Ham_Bu92}
    H_{\text{Bu92}}(z,y) = z^2 \left( \frac{z}{108} - \frac{y^2}{72}+ \frac{q}{6} \right) +f(x)\,, \hspace{10ex} \includegraphics[width=.17\textwidth,valign=c]{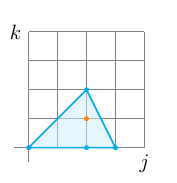}   
\end{equation}
\vspace*{-2ex}

\noindent
with $f$ a function only depending on $x$. The corresponding Newton polygon represents the algebraic curve with genus $1$ algebraic and area equal to $3$. The system IX.B(2) is then Hamiltonian with respect to the non-standard $2$-form~\eqref{eq:2form_92}. We emphasise that in \eqref{eq:Ham_Bu92} if $f(x)=0$ the corresponding Newton polygon is of genus 0.

}

\subsection{System IX.B(5)}
The system IX.B(5) is given by\footnote{Note the typo in \cite{Guillot} in the second equation, see equation (12) in \cite{Bureau table} instead. } 
\begin{equation}
\begin{cases} \label{eq:syst_95}
y' = -\,y^2 + z + 3\,q(x), \\[1ex]
z' = 4\,yz -  9\,q'(x),
\end{cases} \qquad (q''(x) - 6\,q^2(x))'' = 0\,.
\end{equation}
We assume that  function $q(x)$ in the coefficient satisfies the first Painlev\'e equation~\eqref{eq:P1}.  The system is non-Hamiltonian with respect to the standard symplectic 2-form {$dz \wedge dy$}. However, it is worth noticing that changing the coefficients of~$y^2$ to $2y^2$ in the first equation in~\eqref{eq:syst_92}, the resulting system is Hamiltonian with genus 1 Hamiltonian
\vspace*{-3ex}
\begin{equation}
H_{\text{95m}}(z,y)=9 q'\,y+3 q\, z+2\, z y^2+\frac{z^2}{2}\hspace{10ex} \includegraphics[width=.17\textwidth,valign=c]{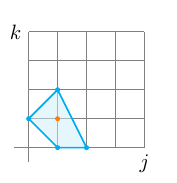}   
\end{equation}
where the labelling 95m stays for the modified version of the system IX.B(5).

The regularisation of the system IX.B(5) takes place through the following cascades of blow-ups from the compactified space $\mathbb{P}^2$: 
\begin{equation}\label{eq:syst_95_cascades}
\vcenter{\hbox{
    \begin{tikzpicture}
    \begin{scope} 
        \path (0,0) coordinate (a0) -- ++(3,0)  coordinate (a1) -- ++(3,0)  coordinate (a2) -- ++(3,0)  coordinate (a3) -- ++(0,-1)  coordinate (a4);
        \node (b0) at (a0) {$(u_0,v_0)=(0\,, 0)$};
        \node (b1) at (a1) {$(U_1,V_1)=(0\,, 0)$};
        \node (b2) at (a2) {$(\tilde{u}_2,\tilde{v}_2)=(0\,, 3)$};
        \node (b3) at (a3) {$(u_3,v_3)=(0\,, 0)$};
        \node (b4) at (a4) {$(u_4,v_4)=(0\,, -9\,q(x))$};
        \draw[<-,thick] (b0) -- (b1);
        \draw[<-,thick] (b1) -- (b2);
        \draw[<-,thick] (b2) -- (b3);
        \draw[<-,thick] (b3) -- (b4);
        \end{scope} 
    \begin{scope} 
        \path (0,-2) coordinate (a0) -- ++(3,0)  coordinate (a1) -- ++(3.5,0)  coordinate (a2) -- ++(0,-1)  coordinate (a3) -- ++(0,-1)  coordinate (a4) -- ++(-4.6,0)  coordinate (a5);
        \node (b0) at (a0) {$(U_0,V_0)=(0\,, 0)$};
        \node (b1) at (a1) {$(U_5,V_5)=(0\,, 0)$};
        \node (b2) at (a2) {$(U_6,V_6)=(3\,q'(x),0)$};
        \node (b3) at (a3) {$(U_{7},V_{7})=(\frac{3}{2}\,q''(x),0)$};
        \node (b4) at (a4) {$(U_{8},V_{8})=(\frac{3}{2}\,q'''(x)-9\,q(x)q'(x),0)$};
        \node (b5) at (a5) {$(u_{9},v_{9})=(0\,,0)$};
        \draw[<-,thick] (b0) -- (b1);
        \draw[<-,thick] (b1) -- (b2);
        \draw[<-,thick] (b2) -- (b3);
        \draw[<-,thick] (b3) -- (b4);
        \draw[<-,thick] (b4) -- (b5);
        \end{scope} 
    \end{tikzpicture}}}
\end{equation}
with the regularisation condition for the system at the last blow-up being the one reported in~\eqref{eq:syst_95}, i.e.\ if we remove the assumption on $q(x)$ from the beginning, we find that the system is regularised if $q(x)$ satisfies $\text{P}_{\text{I}}$ in~\eqref{eq:P1}. The configuration of the intersection diagram and the corresponding identification of the components of the anti-canonical divisor are:
\begin{equation}\label{eq:syst_95_diag} \includegraphics[width=.45\textwidth,valign=c]{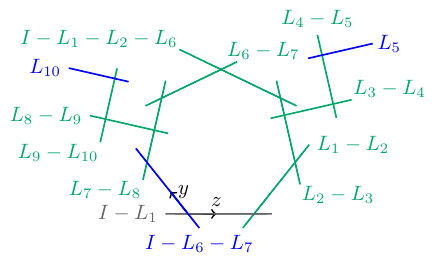} \qquad \hspace{10ex} 
    \begin{array}{l} 
       \delta_1 = \mathcal{L}_1 - \mathcal{L}_2  \\[.5ex]
       \delta_2 =  \mathcal{L}_4 - \mathcal{L}_{5}  \\[.5ex]
       \delta_3 =  \mathcal{L}_3 - \mathcal{L}_4  \\[.5ex]
       \delta_4 = \mathcal{L}_2 - \mathcal{L}_3   \\[.5ex]
       \delta_5 =  \mathcal{I} - \mathcal{L}_1 - \mathcal{L}_2 - \mathcal{L}_6  \\[.5ex]
       \delta_6 =   \mathcal{L}_6 - \mathcal{L}_7   \\[.5ex]
       \delta_7 =   \mathcal{L}_7 - \mathcal{L}_8  \\[.5ex]
       \delta_8 =  \mathcal{L}_8 - \mathcal{L}_{9}    \\[.5ex]
       \delta_9 =   \mathcal{L}_{9} - \mathcal{L}_{10}   \\[.5ex]
    \end{array}
\end{equation}

\begin{theorem}
Let $(y_{95}\,,z_{95})$ satisfy the system IX.B(5) in~\eqref{eq:syst_95} and $(y_{92}\,,z_{92})$ satisfy the system IX.B(2) in~\eqref{eq:syst_92} with the same $q(x)$ satisfying the first Painlev\'e equation~\eqref{eq:P1}. Then the systems are related by the birational transformation of variables
\begin{equation}\label{eq:coord_95_92}
y_{95}=-\frac{y_{92}}{2},\qquad z_{95}=-\frac{z_{92}}{2}+\frac{3 y_{92}^2}{4}-9q
\end{equation} 
and
\begin{equation}\label{eq:coord_92_95}
y_{92}=-2y_{95},\qquad z_{92}=-2(z_{95}-3y_{95}^2+9q).
\end{equation} 
\end{theorem}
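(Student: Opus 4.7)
The plan is to mirror the geometric matching strategy carried out in the previous subsection for the pair~(V, IX.B(2)), now applied to the pair~(IX.B(2), IX.B(5)). Both systems share the surface type $E_8^{(1)}$, with intersection diagrams given in~\eqref{eq:syst_92_diag} and~\eqref{eq:syst_95_diag}. My first step is to construct an identification between the Picard bases $(\mathcal{K},\mathcal{F}_i)$ of the IX.B(2) surface and $(\mathcal{I},\mathcal{L}_j)$ of the IX.B(5) surface such that the $\delta_k$ classes agree under the identification. Reading the two diagrams node by node -- in particular matching the nodes at which the extra branch of the $E_8^{(1)}$ diagram attaches -- pins down the dictionary, possibly after an auxiliary blow-up on one side (as already needed in the V versus IX.B(2) matching).

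Once the dictionary is fixed, the loci $\{y_{95}=0\}$ and $\{z_{95}=0\}$ can be translated into divisor classes on the IX.B(2) surface. The explicit formulas~\eqref{eq:coord_95_92} suggest, and the matching confirms, that $\{y_{95}=0\}$ corresponds to a line class ($\mathcal{K}$ minus some exceptional classes), while $\{z_{95}=0\}$ corresponds to a conic class ($2\mathcal{K}$ minus three exceptional classes). This dictates the ansatz
\begin{align*}
y_{95} &= \alpha_{10}\,z_{92}+\alpha_{01}\,y_{92}+\alpha_{00}\,, \\
z_{95} &= \beta_{20}\,z_{92}^2+\beta_{11}\,z_{92}\,y_{92}+\beta_{02}\,y_{92}^2+\beta_{10}\,z_{92}+\beta_{01}\,y_{92}+\beta_{00}\,,
\end{align*}
with coefficient functions $\alpha_{ij}(x)$ and $\beta_{ij}(x)$. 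Imposing that the corresponding loci pass through precisely the base points of the cascade~\eqref{eq:syst_92_cascades} that their divisor classes require eliminates most of the free coefficients. The remaining ones are fixed by demanding that the functions $(y_{95},z_{95})$ defined by the ansatz satisfy~\eqref{eq:syst_95} whenever $(y_{92},z_{92})$ satisfies~\eqref{eq:syst_92}; here the identity $q''=6q^2+x$ (namely that $q$ satisfies $\text{P}_{\text{I}}$) is used to reduce higher derivatives of $q$. Matching monomials in $(z_{92},y_{92})$ then produces a small algebraic system whose unique solution yields~\eqref{eq:coord_95_92}.

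The main obstacle I expect is the initial divisor-class matching: the two diagrams~\eqref{eq:syst_92_diag} and~\eqref{eq:syst_95_diag} attach the loose pair of $(-2)$-curves $(\delta_1,\delta_2)$ at different positions relative to the branching node, so a naive index-by-index identification fails and an auxiliary blow-up is required to align the two pictures, just as for V versus IX.B(2). Once~\eqref{eq:coord_95_92} has been derived, the inverse~\eqref{eq:coord_92_95} follows by direct algebraic inversion: $y_{95}$ depends only on $y_{92}$, so $y_{92}=-2y_{95}$ is immediate, and substituting back into the expression for $z_{95}$ gives $z_{92}=-2(z_{95}-3y_{95}^2+9q)$. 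A final check by direct substitution -- verifying that~\eqref{eq:coord_95_92} together with~\eqref{eq:syst_92} and the relation $q''=6q^2+x$ indeed imply~\eqref{eq:syst_95} -- provides an independent confirmation and closes the argument. The same transformation can alternatively be recovered by the iterative regularisation method of~\cite{GF}, as announced at the end of Section~\ref{sec:geom_iter}, offering a second route to the same formulas.
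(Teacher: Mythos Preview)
Your proposal is correct and follows essentially the same geometric-matching route as the paper: identify the Picard lattices of the two $E_8^{(1)}$ surfaces, read off that one coordinate is linear and the other conic in the target variables, impose passage through the appropriate base points of the cascade, and fix the remaining coefficients by demanding compatibility with the ODEs. Two small remarks: (i) no auxiliary blow-up is actually needed here, since both IX.B(2) and IX.B(5) already carry ten exceptional classes, so the dictionary $(\mathcal{K},\mathcal{F}_i)\leftrightarrow(\mathcal{I},\mathcal{L}_j)$ is direct; (ii) the paper in fact carries out the derivation in the opposite direction first (writing $(y_{92},z_{92})$ as a line and a conic in $(y_{95},z_{95})$, using the base points of the IX.B(5) cascade~\eqref{eq:syst_95_cascades}), and also gives a full second proof via iterative regularisation, producing an explicit intermediate system that becomes IX.B(2) after one further blow-up and a rescaling --- whereas you only allude to this alternative.
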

\begin{proof}
We prove  the theorem with both the geometric approach and the iterative polynomial regularisation. Comparing~\eqref{eq:syst_92_diag} and~\eqref{eq:syst_95_diag} we establish the correspondence between the two classes of divisors, namely $(\mathcal{K},\mathcal{F}_i)$ and $(\mathcal{I},\mathcal{L}_i)$. We obtain the following equivalences
\begin{equation*}
    \begin{array}{l l}
        \mathcal{F}_1 = \mathcal{I} - \mathcal{L}_2 - \mathcal{L}_3  &   \mathcal{L}_1 = \mathcal{K} - \mathcal{F}_2 - \mathcal{F}_3  \\
        \mathcal{F}_2 = \mathcal{I} - \mathcal{L}_1 - \mathcal{L}_3 &    \mathcal{L}_2 = \mathcal{K} - \mathcal{F}_1 - \mathcal{F}_3\\
        \mathcal{F}_3 = \mathcal{I} - \mathcal{L}_1 - \mathcal{L}_2 &  \mathcal{L}_3 = \mathcal{K} - \mathcal{F}_1 - \mathcal{F}_2\\
        \mathcal{F}_4 = \mathcal{L}_6 &  \mathcal{L}_4 = \mathcal{F}_9\\
        \mathcal{F}_5 = \mathcal{L}_7 & \mathcal{L}_5 = \mathcal{F}_{10}\\
        \mathcal{F}_6 = \mathcal{L}_8 & \mathcal{L}_6 = \mathcal{F}_4  \\
        \mathcal{F}_7 = \mathcal{L}_9 & \mathcal{L}_7 = \mathcal{F}_5 \\
        \mathcal{F}_8 = \mathcal{L}_{10} & \mathcal{L}_8 =\mathcal{F}_6 \\
        \mathcal{F}_9 = \mathcal{L}_4 & \mathcal{L}_9 = \mathcal{F}_7\\
        \mathcal{F}_{10} = \mathcal{L}_5 & \mathcal{L}_{10} = \mathcal{F}_8\\
        \mathcal{K} = 2\,\mathcal{I} - \mathcal{L}_1 - \mathcal{L}_2 - \mathcal{L}_3 \hspace{10ex} & \mathcal{I} = 2\,\mathcal{K} - \mathcal{F}_1 - \mathcal{F}_2 - \mathcal{F}_3
    \end{array}
\end{equation*}
In order to find the changes of variables we consider the expressions for the axes $(z_{92}\,,y_{92})$, i.e.\ 
\begin{equation}\label{eq:change_92_95}
    \begin{split}
        y_{92}&\colon \{z_{92}=0\}  \qquad \mathcal{K} - \mathcal{F}_1  = \mathcal{I} - \mathcal{L}_1\\ 
        z_{92}&\colon \{y_{92}=0\}  \qquad \mathcal{K} - \mathcal{F}_9 - \mathcal{F}_{10}   = 2\,\mathcal{I} - \mathcal{L}_1- \mathcal{L}_2- \mathcal{L}_3- \mathcal{L}_4- \mathcal{L}_5\\ 
    \end{split}
\end{equation}
and those for the axes $(z_{95}\,,y_{95})$
\begin{equation}\label{eq:change_95_92}
    \begin{split}
        y_{95}&\colon \{z_{95}=0\}  \qquad  \mathcal{I} - \mathcal{L}_1 = \mathcal{K} - \mathcal{F}_1 \\ 
        z_{95}&\colon \{y_{95}=0\}  \qquad \mathcal{I} - \mathcal{L}_6 - \mathcal{L}_7   = 2\,\mathcal{K} - \mathcal{F}_1- \mathcal{F}_2- \mathcal{F}_3- \mathcal{F}_4- \mathcal{F}_5.\\ 
    \end{split}
\end{equation}
We start by considering the relations in~\eqref{eq:change_92_95}. The expression for $y_{92}$ is given in terms of a linear function in $(z_{95},y_{95})$ intersecting the exceptional line $\mathcal{L}_1$ or, in other words, going through the point with coordinates $(u_0,v_0)=(0,0)$ in~\eqref{eq:syst_95_cascades}. This allows us to write the following 
\begin{equation}\label{eq:line_y92}
    y_{92}\colon  a_{10}\,y_{95} + a_{01}\,z_{95} + a_{00}=0\,.
\end{equation}
By inverting the changes of coordinates in~\eqref{eq:CP2}, the relation~\eqref{eq:line_y92} can be rewritten in the chart~$(u_0,v_0)$ as 
\begin{equation}
    a_{10}\,v_0 + a_{01} + a_{00}\,u_0 = 0\,, 
\end{equation}
and imposing that the point $(u_0,v_0)=(0,0)$ is part of the line, we find the condition $a_{01}=0$. For the expression for the axis $z_{92}$ the procedure is similar, but involves more steps since the expression in~\eqref{eq:change_92_95} is a conic 
\begin{equation}\label{eq:line_z92}
    z_{92}\colon b_{20}\,(y_{95})^2+b_{11}\,y_{95}\,z_{95} + b_{02}\,(z_{95})^2 + b_{10}\,y_{95} + b_{01}\,z_{95} + b_{00}=0 
\end{equation}
and to fix the parameters of the conic $b_{ij}(x)$ we impose that the conic intersects the lines $\mathcal{L}_1$, $\mathcal{L}_2$, $\mathcal{L}_3$, $\mathcal{L}_4$ and $\mathcal{L}_5$, or in other words it goes through the points with coordinates identified by the charts~$(u_0,v_0)$, $\dots$, $(u_4,v_4)$ in~\eqref{eq:syst_95_cascades}. We start by rewriting~\eqref{eq:line_z92} in terms of the chart $(u_0,v_0)$ by inverting the relations in~\eqref{eq:CP2}, i.e.
\begin{equation}
    b_{20}\,(v_0)^2+b_{11}\,v_0 + b_{02} + b_{10}\,u_0\,v_0 + b_{01}\,u_0 + b_{00}\,(u_0)^2=0 \,,
\end{equation}
and imposing the passage through the point $(u_0,v_0)=(0,0)$ we fix the parameter $b_{02}=0$. We now consider the same expression in the chart $(U_1,V_1)$, i.e. 
\begin{equation}
    b_{20}\,V_1+b_{11} + b_{10}\,U_1\,V_1 + b_{01}\,U_1 + b_{00}\,(U_1)^2\,V_1=0 \,,
\end{equation}
from which we find $b_{11}=0$ imposing the passage through the point $(U_1,V_1)=(0,0)$. Considering the same for the remaining intersecting points we fix the coefficient functions
\begin{equation}
    b_{20} = -3\,b_{01}\,, \qquad b_{10} = 0 \,, \qquad b_{00} = 9q(x)\,b_{01}\,.
\end{equation}
Therefore, we can rewrite the expressions~\eqref{eq:line_y92} and~\eqref{eq:line_z92} as 
\begin{equation}
    \begin{split}
    y_{92} = a_{10}\,y_{95}+a_{00} , \qquad 
    z_{92} = b_{01}(z_{95} - 3\,y_{95}^2+9q(x)).
    \end{split} 
\end{equation}
Finally, the last coefficients are fixed by requiring that the coordinates $(y_{92},z_{92})$ satisfy the system~\eqref{eq:syst_92}, and we obtain the final expressions~\eqref{eq:coord_95_92}. 

By inverting the form of~\eqref{eq:coord_95_92} we find~\eqref{eq:coord_92_95}, that is also verified starting from~\eqref{eq:change_95_92} and proceeding analogously to what we shown above, i.e.\ looking for the constraints to obtain the set of coefficient functions for a linear and a conic expression to describe the axes $y_{95}$ and $z_{95}$ respectively.  

We give a second proof of the theorem via the iterative polynomial regularisation. We consider the system in $(z_{95},y_{95})$ and proceed with the blow-up transformations until it is regularised. In the regularisation process we stress that we make use of the intermediate change of variables $(u_i,v_i)\mapsto (\tilde{u}_i,1/\tilde{v}_i)$ in correspondence with the first non-zero point of indeterminacy. We are interested in one of the two regularised system in the final charts. In coordinates $(u,v)$ this takes the form
\begin{equation}\label{eq:syst_95_reg_1}
   \begin{cases} 
   u'=-2+6 q u^2 -u^3 v,\\[1ex]
   v'=-6 q- u v+u^2 v^2\,. 
   \end{cases}
\end{equation}
The regularising transformation for this cascade is obtained by tracking back all the coordinate transformations to get the regularised system. Hence, we can reconstruct the expression of the coordinate functions $(z_{95},y_{95})$ in terms of the variables of the final chart $(u,v)$, as 
\begin{equation}
   z_{95}=\frac{3+u^2(uv-9q)}{u^2}\,, \qquad  y_{95}=\frac{1}{u}\,. 
\end{equation}
We now consider the system~\eqref{eq:syst_95_reg_1} in the variables $(u,v)$ and iterate the regularisation process. We choose a compactification for $(u,v) \in \mathbb{C}^2$ and identify the points of indeterminacy. Just after one blow-up we obtain the regular polynomial system in the new set of coordinates $(w(x),t(x))$ 
\begin{equation}\label{eq:syst_95_reg_2}
   \begin{cases} 
   w'=2\,w^2+t-6q,\\[1ex] 
   t'=-2\,wt,
   \end{cases}
\end{equation}
with $(u,v)\mapsto (1/w,\,wt)$. We observe that the system~\eqref{eq:syst_95_reg_2} is similar to the system IX.B(2) in~\eqref{eq:syst_92}, and they coincide once we consider the rescaling of the variables
\begin{equation}
    w=-\frac{y_{92}}{2},\qquad t=-\frac{z_{92}}{2}\,.
\end{equation}
Taking the composition of all the transformation of variables and their rescaling, we obtain the statement of the theorem. 
\end{proof}

A direct consequence of the previous Theorem is the following

\begin{corollary}
Combining the transformations relating the system V and the system IX.B(2) in~\eqref{eq:change_5_92} with the transformations~\eqref{eq:change_95_92} relating the system IX.B(2) and IX.B(5), we have 
\begin{equation}\label{eq:syst_95_5}
    y_{95}=-\frac{z_5-q'}{2(y_5-q)},\qquad z_{95}=-9 q-3(y_5-q)+\frac{3(z_5-q')^2}{4(y_5-q)^2}
\end{equation}
and 
\begin{equation}\label{eq:syst_5_95}
    y_5=-2q+y_{95}^2-\frac{z_{95}}{3},\qquad z_5=6 q\, y_{95}-2y_{95}^3+\frac{2}{3}\,y_{95}\,z_{95}+q'.
\end{equation}
\end{corollary}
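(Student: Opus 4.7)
The plan is to carry out a direct composition of the two birational transformations already established: namely, the change of variables between system V and system IX.B(2) in \eqref{eq:change_5_92}, and the change of variables between systems IX.B(2) and IX.B(5) in the preceding theorem. Since both transformations hold whenever $q(x)$ satisfies the first Painlev\'e equation~\eqref{eq:P1}, their composition is valid under the same hypothesis, and the corollary is essentially an algebraic verification.

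For the forward direction \eqref{eq:syst_5_95}, I would substitute the formulas $z_{92}=-2(z_{95}-3y_{95}^2+9q)$ and $y_{92}=-2y_{95}$ from \eqref{eq:coord_92_95} directly into the expressions $y_5 = z_{92}/6 + q$ and $z_5 = y_{92}z_{92}/6 + q'$ given in \eqref{eq:change_5_92}. Expanding and collecting terms yields
\[
y_5 = -\tfrac{1}{3}(z_{95}-3y_{95}^2+9q) + q = -2q + y_{95}^2 - \tfrac{z_{95}}{3},
\]
and an analogous expansion for $z_5$ produces $z_5 = 6q\,y_{95} - 2y_{95}^3 + \tfrac{2}{3}y_{95}z_{95} + q'$, matching \eqref{eq:syst_5_95}.

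For the inverse direction \eqref{eq:syst_95_5}, I would first invert \eqref{eq:change_5_92} to express $(z_{92},y_{92})$ in terms of $(z_5,y_5)$: solving gives $z_{92}=6(y_5-q)$ and then $y_{92}=(z_5-q')/(y_5-q)$. Substituting these into \eqref{eq:coord_95_92} yields $y_{95} = -y_{92}/2 = -(z_5-q')/(2(y_5-q))$, and, after simplification, $z_{95} = -3(y_5-q) + \tfrac{3(z_5-q')^2}{4(y_5-q)^2} - 9q$, which is \eqref{eq:syst_95_5}.

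There is no real obstacle here beyond bookkeeping: the content of the corollary is that the composition of birational maps is birational, so the only thing to check is that no degenerate denominator is introduced by composing the rational with the polynomial transformation. The appearance of $(y_5-q)$ in the denominators of \eqref{eq:syst_95_5} reflects the exceptional locus $y_5 = q$, $z_5 = q'$ of \eqref{eq:change_5_92}, which is exactly the base point structure identified in the geometric analysis; away from this locus, the composition is a well-defined birational transformation, and the verification reduces to the algebraic manipulations indicated above.
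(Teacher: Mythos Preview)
Your proposal is correct and follows exactly the approach indicated by the paper: the corollary is stated as the direct composition of the two known birational maps \eqref{eq:change_5_92} and \eqref{eq:coord_95_92}--\eqref{eq:coord_92_95}, and your algebraic verification of both directions is accurate. The paper additionally re-derives \eqref{eq:syst_5_95} by a direct geometric matching of the divisor configurations of systems~V and~IX.B(5), but this is presented as a supplementary check rather than the proof of the corollary itself.
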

The expressions in~\eqref{eq:syst_95_5} and~\eqref{eq:syst_5_95} can be obtained from the geometric approach as well, from the comparison between~\eqref{eq:syst_5_diagram} and~\eqref{eq:syst_92_diag}. This allows us to find the correspondence of the components:  
\begin{equation*}
    \begin{array}{l l}
        \mathcal{E}_1 = \mathcal{I} - \mathcal{L}_1 - \mathcal{L}_5  &   \mathcal{L}_1 = 2\,\mathcal{H} - \mathcal{E}_1 - \mathcal{E}_2 - \mathcal{E}_3 - \mathcal{E}_4  -\mathcal{E}_{10}   \\
        \mathcal{E}_2 = \mathcal{I} - \mathcal{L}_1 - \mathcal{L}_4 &    \mathcal{L}_2 = \mathcal{H} - \mathcal{E}_4  -\mathcal{E}_{10} \\
        \mathcal{E}_3 = \mathcal{I} - \mathcal{L}_1 - \mathcal{L}_3 &  \mathcal{L}_3 = \mathcal{H} - \mathcal{E}_3  -\mathcal{E}_{10}  \\
        \mathcal{E}_4 = \mathcal{I} - \mathcal{L}_1 - \mathcal{L}_2 &  \mathcal{L}_4 = \mathcal{H} - \mathcal{E}_2  -\mathcal{E}_{10}\\
        \mathcal{E}_5 = \mathcal{L}_6 & \mathcal{L}_5 = \mathcal{H} - \mathcal{E}_1  -\mathcal{E}_{10}\\
        \mathcal{E}_6 = \mathcal{L}_7 & \mathcal{L}_6 = \mathcal{E}_5  \\
        \mathcal{E}_7 = \mathcal{L}_8 & \mathcal{L}_7 = \mathcal{E}_6 \\
        \mathcal{E}_8 = \mathcal{L}_{9} & \mathcal{L}_8 =\mathcal{E}_7 \\
        \mathcal{E}_9 = \mathcal{L}_{10} & \mathcal{L}_9 = \mathcal{E}_8\\
        \mathcal{E}_{10} = 2\,\mathcal{I} - \mathcal{L}_1 - \mathcal{L}_2 - \mathcal{L}_3 -2\,\mathcal{L}_5 & \mathcal{L}_{10} = \mathcal{E}_9\\
        \mathcal{H} = 3\,\mathcal{I} - 2\,\mathcal{L}_1 - \mathcal{L}_2 - \mathcal{L}_3 - \mathcal{L}_4 - \mathcal{L}_5 \hspace{10ex} & \mathcal{I} = 3\,\mathcal{H} - \mathcal{E}_1 - \mathcal{E}_2 - \mathcal{E}_3 - \mathcal{E}_4 - 2\, \mathcal{E}_{10}
    \end{array}
\end{equation*}
Analogously to the way we proceeded above, we consider the axes for the coordinates $(z_5,y_5)$ in terms of the coordinates $(z_{95},y_{95})$, i.e.
\vspace*{-1ex}
\begin{equation}
    \begin{split}
        y_5 &\colon \{z_5=0\} \qquad \mathcal{H}-\mathcal{E}_1 = 2\, \mathcal{I} - \mathcal{L}_1 - \mathcal{L}_2 - \mathcal{L}_3 - \mathcal{L}_4 \\[1ex]
        z_5 &\colon \{y_5=0\} \qquad \mathcal{H} = 3\,\mathcal{I} - 2\,\mathcal{L}_1 - \mathcal{L}_2 - \mathcal{L}_3 - \mathcal{L}_4 - \mathcal{L}_5.
    \end{split}
\end{equation}
The first expression gives us a conic for $y_5$ as a function of $(z_{95},y_{95})$, whereas the second expression describes a cubic for $z_5$ in the variables $(z_{95},y_{95})$. The generic conic and cubic curves are 
\begin{align}
\label{eq:y5_in_95}
    y_5 &\colon~~ b_{20}\,(z_{95})^2+b_{11}\,z_{95}\,y_{95} + b_{02}\,(y_{95})^2 + b_{10}\,z_{95} + b_{01}\,y_{95} + b_{00}=0\,,  \\[1ex]
    \label{eq:z5_in_95}
   \begin{split}  z_5 &\colon~~ c_{30}\,(z_{95})^3+c_{21}\,z_{95}^2\,y_{95} +c_{12}\,z_{95}\,y_{95}^2 + c_{03}\,(y_{95})^3  \\ 
   & ~~~~ + c_{20}\,(z_{95})^2+c_{11}\,z_{95}\,y_{95} + c_{02}\,(y_{95})^2 + c_{10}\,z_{95} + c_{01}\,y_{95} + c_{00}=0 \,. 
   \end{split}
\end{align}
The conic~\eqref{eq:y5_in_95} goes through the points $(u_0,v_0)=(0,0)$, $\dots$, $(u_3,v_3)=(0,0)$ in~\eqref{eq:syst_95_cascades}, fixing the coefficients 
\begin{equation}
    b_{20}=0\,, \qquad b_{11}=0\,, \qquad b_{10}=-\frac{b_{02}}{3}\,, \qquad b_{01}=0\,. 
\end{equation}
The cubic~\eqref{eq:z5_in_95} goes through the points $(u_0,v_0)=(0,0)$, $(U_1,V_1)=(0,0)$, $(\tilde{u}_2,\tilde{v}_2)=(0,3)$, $(u_3,v_3)=(0,0)$ and $(u_4,v_4)=(0,-9\,q(x))$ in~\eqref{eq:syst_95_cascades}, producing the constraints on the coefficients
\begin{equation}
    c_{30}=0\,, \quad c_{21}=0\,, \quad c_{20}=-\frac{c_{12}}{3}\,, \quad c_{11}=-\frac{c_{03}}{3}\,, \quad c_{10}=-\frac{c_{02}}{3}+ 27\,c_{12}\,q(x).
\end{equation}
The change of variables for $y_5$ and $z_5$ is then 
\begin{equation}\label{eq:intermediate_change_5_95}
\begin{split} 
    y_5&=b_{02} \left( y_{95}^2 -\frac{z_{95}}{3} \right) +b_{00} \,, \\[1ex] 
    z_5 &= c_{00}+c_{01} y_{95}+c_{02} \left( y_{95}^2-\frac{ z_{95}}{3}\right)+c_{03} \left( y_{95}^3-\frac{y_{95} z_{95}}{3} \right)+c_{12} \left(y_{95}^2 z_{95}-\frac{z_{95}^2}{3}  +27 q z_{95}\right). 
    \end{split} 
\end{equation}
Lastly, the remaining coefficient functions are completely determined by requiring that the transformed variables still satisfy the system V. In particular, equating the coefficient functions of terms of the same order in $(z_{95},y_{95})$, we obtain the following constraints
\begin{equation}
\begin{split} 
    c_{00}&= b_{00}'+ 3\,b_{02}\,q'\,,\quad c_{01}=6 q\,,\quad c_{02}= b_{02}' \,, \quad c_{03}=-2\,b_{02}\,,\quad c_{12}=0\,, \\[1ex]
    b_{00}&= -2q\,, \qquad b_{02}=1 \,, 
\end{split} 
\end{equation}
which included in the expression~\eqref{eq:intermediate_change_5_95} yield the transformation~\eqref{eq:syst_5_95}, with the function $q(x)$ always satisfying $\text{P}_{\text{I}}$.

\subsection{System XIV} 
The system XIV is given by 
\begin{equation}\label{eq:syst_14}
\begin{array}{c c}
\begin{cases} 
y' = y(2z - y) + 3\,p(x)\,y + r(x), \\[1ex]
z' = z(y - z) - 2\,p(x)\,z,
\end{cases}  & \qquad \begin{aligned}
     q(x) &= \frac{1}{12}(p'(x) + p^2(x) - r(x)), \\[1ex]
      r'(x) &= p(x)\,r(x),\quad   (q''(x) - 6\,q^2(x))'' = 0\,. 
\end{aligned}
\end{array}
\end{equation} 
We assume that the function $q(x)$, which is hidden  in the regularisation conditions, satisfies $\text{P}_{\text{I}}$ in~\eqref{eq:P1}, and we stress the fact that this system is non-Hamiltonian. 

To regularise the system XIV, we find three cascades of blow-ups from $\mathbb{P}^2$, emerging from the origins of the charts $(u_0,v_0)$ and $(U_0,V_0)$ and from the additional point of indeterminacy visible in both charts, with coordinates $(u_0,v_0)=(0,3/2)$ and $(U_0,V_0)=(2/3,0)$ respectively:
\begin{equation}\label{eq:syst_14_cascade}
\vcenter{\hbox{
    \begin{tikzpicture}
    \begin{scope} 
        \path (0,0) coordinate (a0) -- ++(3,0)  coordinate (a1) -- ++(3.4,0)  coordinate (a2);
        \node (b0) at (a0) {$(u_0,v_0)=(0\,, 0)$};
        \node (b1) at (a1) {$(u_1,v_1)=(0\,, 0)$};
        \node (b2) at (a2) {$(u_2,v_2)=(0\,, -r(x))$};
        \draw[<-,thick] (b0) -- (b1);
        \draw[<-,thick] (b1) -- (b2);
        \end{scope} 
    \begin{scope} 
        \path (0,-1) coordinate (a0) -- ++(3.4,0)  coordinate (a1) -- ++(4.6,0)  coordinate (a2) -- ++(-1,0) node (A) {${\,}_{\,_{\,}}$} -- ++(0,-1)  coordinate (a3) -- ++(0,-1.7)  coordinate (a4)  -- ++(0,-2.7)  coordinate (a5);
        \node (b0) at (a0) {$(u_3,v_3)=(0\,, \tfrac{3}{2})$};
        \node (b1) at (a1) {$(u_4,v_4)=(0\,, 3\,p(x))$};
        \node (b2) at (a2) {$(u_5,v_5)=(0\,, \tfrac{1}{2} \left(r(x)-3 p'(x)\right))$};
        \node (b3) at (a3) {$ \hspace*{-30ex} (u_6,v_6)=(0\,, p''(x)+2\, p(x) p'(x)-\frac{2}{3} p(x) r(x)-\tfrac{1}{3}r'(x))=(0\,, 12\, q'(x))$};
        \node (b4) at (a4) {$ \hspace*{-24ex}\begin{aligned}  (u_7,v_7)&=(0\,, \tfrac{2}{3}\,p^2(x)r(x)-2\,p^2(x)p'(x)+\tfrac{2}{3}\,p'(x)r(x)-2(p'(x)^2+p(x)r'(x) \\ &\hspace{6ex}-3\,p(x)p''(x)+\tfrac{1}{3}\,r''(x)-p'''(x))\\
        &=(0\,,-12\,q''(x)-p(x)(p'(x)-p^2(x))'\,)
        \end{aligned} $};
        \node[] (b5) at (a5) {$\hspace*{-18ex}\begin{aligned}  (u_8,v_8)&=(0\,, 2\, p^{(IV)}(x)+6\, p'''(x) p(x)-\tfrac{1}{3}\, r(x) p''(x)+4\, p^2(x) p''(x)\\
        &\hspace{6ex}-\tfrac{7}{3}\, p'(x) r'(x)+\tfrac{4}{3}\, p(x) r(x) p'(x)+2\, p(x) (p'(x))^2+11\, p'(x) p''(x)\\
        &\hspace{6ex}-2\, p(x) r''(x)-\tfrac{4}{3} \, p^2(x) r'(x)-\tfrac{2}{3}\, p(x) r^2(x)-\tfrac{2}{3}\, r'''(x)-\tfrac{1}{3}\, r(x) r'(x)\,)\\
        &=(0\,,\, 12(2(q''(x)-3q(x)^2)'+p(x)q'(x)+2p(x)q''(x))\,)
        \end{aligned} $};
        \draw[<-,thick] (b0) -- (b1);
        \draw[<-,thick] (b1) -- (b2);
        \draw[<-,thick] (A) -- (b3);
        \draw[<-,thick] (b3) -- (b4);
        \draw[<-,thick] (b4) -- (b5);
        \end{scope} 
    \begin{scope} 
        \path (0,-8.3) coordinate (a0) -- ++(3.2,0)  coordinate (a1);
        \node (b0) at (a0) {$(U_0,V_0)=(0\,, 0)$};
        \node (b1) at (a1) {$(U_{10},V_{10})=(0\,, 0)$};
        \draw[<-,thick] (b0) -- (b1);
    \end{scope} 
    \end{tikzpicture}}}
\end{equation}
At each step in the cascades, we rewrite the expression for the coordinates in terms of the functions $q(x)$ and $p(x)$ to obtain a more concise expression. 

The configuration of the inaccessible components of the anti-canonical divisor takes the following form in terms of the elements $(\mathcal{J},\mathcal{G}_{i})$, with $i=1, \dots, 11$
\begin{equation}\label{eq:syst_14_diagram}
    \includegraphics[width=.34\textwidth,valign=c]{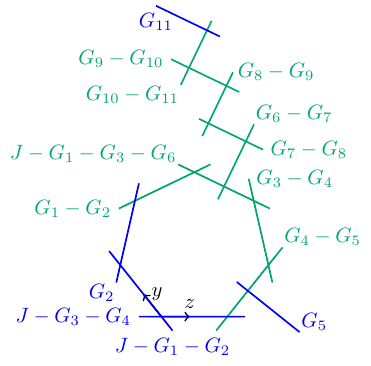} \hspace{10ex} 
    \begin{array}{l} 
       \delta_1 = \mathcal{G}_1 - \mathcal{G}_2  \\[.5ex]
       \delta_2 = \mathcal{G}_4 - \mathcal{G}_5  \\[.5ex]
       \delta_3 =  \mathcal{G}_3 - \mathcal{G}_4  \\[.5ex]
       \delta_4 =  \mathcal{J}-\mathcal{G}_1 - \mathcal{G}_2 - \mathcal{G}_6    \\[.5ex]
       \delta_5 =   \mathcal{G}_6 - \mathcal{G}_7 \\[.5ex]
       \delta_6 =  \mathcal{G}_7 - \mathcal{G}_8 \\[.5ex]
       \delta_7 =  \mathcal{G}_8 - \mathcal{G}_9 \\[.5ex]
       \delta_8 =   \mathcal{G}_9 - \mathcal{G}_{10}  \\[.5ex]
       \delta_9 =   \mathcal{G}_{10} - \mathcal{G}_{11}  \\[.5ex]
    \end{array}
\end{equation}

\noindent 
The relation to system~\eqref{eq:syst_5} can be easily obtained using information in \cite[Section 3.11]{Guillot}. Namely, we have the following birational transformations:
\begin{equation}
    y_{14}=\frac{(6 q+r-6y_5)(q-y_5)}{qp-py_{5}+z_5-q'},\qquad z_{14}=\frac{z_5-q'}{y_5-q}{-p},
\end{equation}
and the inverse one:
\begin{equation}\label{eq:syst_5_14}
    y_5=\frac{1}{6}(6 q+r+y_{14}z_{14}),\qquad z_5=\frac{1}{6}(r z_{14}+y_{14}z_{14}^2 +6q'+p(r+y_{14}z_{14}) ).
\end{equation}
These relations are obtained from the following transformations. First taking $z=u-p$  in system~\eqref{eq:syst_14} one can derive a second order differential equation for the function $u$ which has the form $$u''=-u u' +u^3-12 q u+12 q' $$
with $q $ defined as in~\eqref{eq:syst_14}. Then taking 
\vspace*{-2ex}
\begin{equation}
u=\frac{z_{5}-q'}{y_5-q} 
\end{equation}
yields the transformations above between the systems~\eqref{eq:syst_14} and~\eqref{eq:syst_5}.

To proceed with the geometric approach and compare the irreducible components associated with the system V~\eqref{eq:syst_5_diagram} and the system XIV in~\eqref{eq:syst_14_diagram}, we consider two more blow-ups in system V, so that we deal with the same number of elements. We get the following equivalences:
\begin{equation*}
    \begin{array}{l l}
        \mathcal{E}_1 = \mathcal{J} - \mathcal{G}_1 - \mathcal{G}_5  &   \mathcal{G}_1 = 2\,\mathcal{H} - \mathcal{E}_1 - \mathcal{E}_2 - \mathcal{E}_3  -\mathcal{E}_{10} - \mathcal{E}_{11}  \\
        \mathcal{E}_2 = \mathcal{J} - \mathcal{G}_1 - \mathcal{G}_4 &    \mathcal{G}_2 = \mathcal{H} - \mathcal{E}_{10} - \mathcal{E}_{11} \\
        \mathcal{E}_3 = \mathcal{J} - \mathcal{G}_1 - \mathcal{G}_3 &  \mathcal{G}_3 = \mathcal{H} - \mathcal{E}_3  -\mathcal{E}_{10}  \\
        \mathcal{E}_4 = \mathcal{G}_6 &  \mathcal{G}_4 = \mathcal{H} - \mathcal{E}_2  -\mathcal{E}_{10}\\
        \mathcal{E}_5 = \mathcal{G}_7 & \mathcal{G}_5 = \mathcal{H} - \mathcal{E}_1  -\mathcal{E}_{10}\\
        \mathcal{E}_6 = \mathcal{G}_8 & \mathcal{G}_6 = \mathcal{E}_4  \\
        \mathcal{E}_7 = \mathcal{G}_9 & \mathcal{G}_7 = \mathcal{E}_5 \\
        \mathcal{E}_8 = \mathcal{G}_{10} & \mathcal{G}_8 =\mathcal{E}_6 \\
        \mathcal{E}_9 = \mathcal{G}_{11} & \mathcal{G}_9 = \mathcal{E}_7\\
        \mathcal{E}_{10} = 2\,\mathcal{J} - \mathcal{G}_1 - \mathcal{G}_2 - \mathcal{G}_3- \mathcal{G}_4 -\mathcal{G}_5 & \mathcal{G}_{10} = \mathcal{E}_8\\
        \mathcal{E}_{11} = \mathcal{J} - \mathcal{G}_1 - \mathcal{G}_2 \hspace{10ex} & \mathcal{G}_{11} = \mathcal{E}_{9}\\
        \mathcal{H} = 3\,\mathcal{J} - 2\,\mathcal{G}_1 - \mathcal{G}_2 - \mathcal{G}_3 - \mathcal{G}_4 - \mathcal{G}_5 \hspace{10ex} & \mathcal{J} = 3\,\mathcal{H} - \mathcal{E}_1 - \mathcal{E}_2 - \mathcal{E}_3 - 2\, \mathcal{E}_{10} - \mathcal{E}_{11} .
    \end{array}
\end{equation*}
We look for the expressions for $(z_5,y_5)$ in terms of the coordinates $(z_{14},y_{14})$, by considering the equations for the coordinate axes in the system V 
\begin{equation}\label{eq:y5z5_syst14}
    \begin{split}
        y_5 &\colon \{z_5=0\} \qquad \mathcal{H}-\mathcal{E}_1 = 2\, \mathcal{J} - \mathcal{G}_1 - \mathcal{G}_2 - \mathcal{G}_3 - \mathcal{G}_4 \\[1ex]
        z_5 &\colon \{y_5=0\} \qquad \mathcal{H} = 3\,\mathcal{J} - 2\,\mathcal{G}_1 - \mathcal{G}_2 - \mathcal{G}_3 - \mathcal{G}_4 - \mathcal{G}_5
    \end{split}
\end{equation}
giving rise to a conic for $y_5$ passing through the points $(U_0,V_0)=(0,0)$, $(U_{10},V_{10})=(0,0)$, $(u_0,v_0)=(0,0)$ and $(u_1,v_1)=(0,0)$ in~\eqref{eq:syst_14_cascade}, and a cubic for $z_5$ going through these same points and the additional point $(u_2,v_2)=(0,-r)$. 
The constraints on the coefficient functions are  
\begin{equation}
    b_{20}=0\,, \quad b_{02}=0\,, \quad b_{10}=0\,, \quad b_{01}=0\,, 
\end{equation}
for the conic representing $y_5$ in the first expression of~\eqref{eq:y5z5_syst14}, and 
\begin{equation}
    c_{30}=0\,, \quad c_{03}=0\,, \quad c_{20}=0\,, \quad c_{02}=0\,, \quad c_{10}=r\,c_{21}\,, 
\end{equation}
for the cubic representing $z_5$ in the second expression of~\eqref{eq:y5z5_syst14}. 
The change of variables from the coordinates $(z_5,y_5)$ to $(z_{14},y_{14})$ is then given by the relations
\begin{equation}
\begin{split} 
    y_5 &= b_{11}\, z_{14}\,y_{14}+b_{00}\,,  \\[1ex]
    z_5 &= c_{00}+c_{01}\, y_{14}+c_{11}\, y_{14} z_{14}+c_{12}\, y_{14}^2 z_{14}+c_{21} (r z_{14}+ y_{14} z_{14}^2)\,. 
    \end{split}
\end{equation}
By requiring that the systems V and XIV are satisfied, we set the remaining coefficients. In particular, we have 
\vspace*{-2ex}
\begin{equation}
    \begin{split}
        c_{00}&= b_{00}'\,, \quad c_{01}=0\,, \quad c_{11}=b_{11}\,p+b_{11}'\,, \quad c_{12}=0\,, \quad c_{21}=b_{11}   \\[1ex]
        b_{00}&=\frac{1}{12}(p'+p^2+r) \,, \quad b_{11}=\frac{1}{6}\,, 
    \end{split}
\end{equation}
and rewriting $b_{00}$ in terms of $q(x)$ as defined in~\eqref{eq:syst_14} we reconstruct the transformation~\eqref{eq:syst_5_14}.  

{Interestingly, the pair of functions  $(p(x),r(x))$ appearing in the system XIV~\eqref{eq:syst_14} satisfy the system IX.B(2)~\eqref{eq:syst_92}, as first observed in~\cite{ECGF}. 
}

\section{\texorpdfstring{Bureau-Guillot systems related to $\text{P}_{\text{II}}$}{P2}}\label{sec:P2}
 Analogously to what we have observed in the previous section, the systems related to $\text{P}_{\text{II}}$ admit a surface type that is represented as the extended Dynkin diagram $E_7^{(1)}$, here depicted with a specific labelling for the irreducible components of the anti-canonical divisor
 \vspace*{-2ex}
\begin{equation}\label{eq:diag_E7}
    \includegraphics[width=.34\textwidth,valign=c]{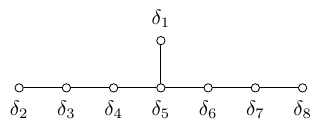}
\end{equation}

\subsection{System IX.B(3)}
The system IX.B(3) is given by 
\begin{equation}\label{eq:syst_93}
\begin{cases} 
y' = -y^2 + z - \dfrac{f(x)}{2}, \\[1ex]
z' = 2\,yz + \dfrac{1}{2}(2\alpha+1),
\end{cases} \qquad f''(x) = 0\,, \quad \alpha\in \mathbb{C} \text{ is a  constant}\,.
\end{equation}
The function $y$ solves the second Painlev\'e equation~\eqref{eq:P2} 
provided that $f(x)=x$.  We will consider this system as the  reference system for the other systems underpinning $\text{P}_{\text{II}}$. 
The system~IX.B(3) is Hamiltonian with the Okamoto Hamiltonian of genus 1
\vspace*{-3ex}
\begin{equation}
    H_{\text{Ok}}(z,y)=\frac{1}{2} z^2 - \left(y ^2 + \frac{x}{2}\right) z  - \left(\alpha + \frac{1}{2}\right) y\,\,, \hspace{10ex} \includegraphics[width=.17\textwidth,valign=c]{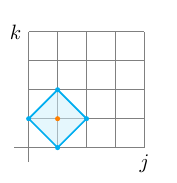} 
\end{equation}
The regularisation of the Hamiltonian system IX.B(3) with an arbitrary $f(x)$ is realised via two cascades of blow-ups from~$\mathbb{P}^2$ in~\eqref{eq:CP2} starting from indeterminacies in the origins of both the affine charts $(u_0,v_0)$ and $(U_0,V_0)$: 
\begin{equation}
\vcenter{\hbox{
	\begin{tikzpicture}
		\begin{scope} 
			\path (0,0) coordinate (a0) -- ++(3.2,0)  coordinate (a1) -- ++(3.2,0)  coordinate (a2) -- ++(3.2,0)  coordinate (a3) -- ++(0,-1)  coordinate (a4)  -- ++(-4.5,0)  coordinate (a5);
			\node (b0) at (a0) {$(u_0,v_0)=(0\,, 0)$};
			\node (b1) at (a1) {$(U_1,V_1)=(0\,, 0)$};
			\node (b2) at (a2) {$(\tilde{u}_2,\tilde{v}_2)=(0\,, 2)$};
			\node (b3) at (a3) {$(u_3,v_3)=(0\,, 0)$};
			\node (b4) at (a4) {$(u_4,v_4)=(0\,, f(x))$};
			\node (b5) at (a5) {$(u_5,v_5)=(0\,, \alpha + \tfrac{1}{2}+f'(x))$};
			\draw[<-,thick] (b0) -- (b1);
			\draw[<-,thick] (b1) -- (b2);
			\draw[<-,thick] (b2) -- (b3);
			\draw[<-,thick] (b3) -- (b4);
			\draw[<-,thick] (b4) -- (b5);
		\end{scope} 
		\begin{scope} 
			\path (0,-2) coordinate (a0) -- ++(3.2,0)  coordinate (a1) -- ++(3.8,0)  coordinate (a2);
			\node (b0) at (a0) {$(U_0,V_0)=(0\,, 0)$};
			\node (b1) at (a1) {$(U_6,V_6)=(0\,, 0)$};
			\node (b2) at (a2) {$(U_7,V_7)=(-(\alpha + \tfrac{1}{2}),0)$};
			\draw[<-,thick] (b0) -- (b1);
			\draw[<-,thick] (b1) -- (b2);
		\end{scope} 
	\end{tikzpicture}}}
\end{equation}
for which we find again the condition given in~\eqref{eq:syst_93} for the function $f(x)$, that is at most linear in the independent variable $x$. Taking track of all the transformations, we can associate the minimal diagram of the form~\eqref{eq:diag_E7} {with} the system with the following expressions for the components: 
\begin{equation}\label{eq:syst_93_diagram}
    \includegraphics[width=.45\textwidth,valign=c]{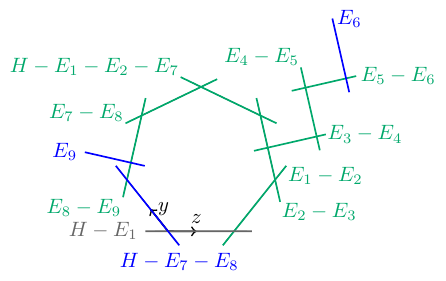} \hspace{10ex} 
    \begin{array}{l} 
       \delta_1 = \mathcal{E}_1 - \mathcal{E}_2  \\[.5ex]
       \delta_2 =  \mathcal{E}_8 - \mathcal{E}_{9}  \\[.5ex]
       \delta_3 =  \mathcal{E}_7 - \mathcal{E}_8  \\[.5ex]
       \delta_4 = \mathcal{H}- \mathcal{E}_1 - \mathcal{E}_2 - \mathcal{E}_7   \\[.5ex]
       \delta_5 =    \mathcal{E}_2 - \mathcal{E}_3  \\[.5ex]
       \delta_6 =   \mathcal{E}_3 - \mathcal{E}_4   \\[.5ex]
       \delta_7 =    \mathcal{E}_4 - \mathcal{E}_5 \\[.5ex]
       \delta_8 =   \mathcal{E}_5 - \mathcal{E}_6    \\[.5ex]
    \end{array}
\end{equation}
We notice that in this case there is an ambiguity in the choice of the components $\delta_2$ and $\delta_8$ because of the symmetry of the diagram $E_7^{(1)}$. 


\subsection{\texorpdfstring{System XIII and Bureau Hamiltonians $H_{\text{Bu13}}$}{H13}}

The system XIII is given by\footnote{Note the typo in the resonance conditions in the current version of \cite{Guillot}.}
\begin{equation}\label{eq:syst_13}
\begin{array}{c c}
\hspace*{-3ex}\begin{cases} 
y' = \frac{1}{2}\,y(2\,z - y) + 2\,p(x)\,y ,\\[1ex]
z' = \frac{1}{2}\,z(3\,y - 2\,z) - 4\,p(x)\,z + 2\,p^2(x) - 2\,p'(x) + f,
\end{cases}   \begin{aligned}
    (p''(x) -  2\,p^3(x) - f(x)\,p(x))' &= 0, \\[1ex]
      f''(x) &= 0\,. 
\end{aligned}
\end{array}
\end{equation} 
The coefficient function $p(x)$ satisfies $\text{P}_{\text{II}}$, and the system is non-Hamiltonian. 

For the regularisation of XIII we find three cascades of blow-ups for the surface, emerging from the origins of the charts $(u_0,v_0)$, $(U_0,V_0)$ and from the additional point visible from both charts, with coordinates $(u_0,v_0)=(0,1)$ and $(U_0,V_0)=(1,0)$ respectively:
\begin{equation}\label{eq:syst_13_cascades}
\vcenter{\hbox{
    \begin{tikzpicture}
    \begin{scope} 
        \path (0,0) coordinate (a0) -- ++(3.1,0)  coordinate (a1);
        \node (b0) at (a0) {$(u_0,v_0)=(0\,, 0)$};
        \node (b1) at (a1) {$(u_1,v_1)=(0\,, 0)$};
        \draw[<-,thick] (b0) -- (b1);
        \end{scope} 
    \begin{scope} 
        \path (0,-1) coordinate (a0) -- ++(3.5,0)  coordinate (a1) -- ++(5.5,0)  coordinate (a2) -- ++(0,-1)  coordinate (a3);
        \node (b0) at (a0) {$(u_0,v_0)=(0\,, 1)$};
        \node (b1) at (a1) {$(u_2,v_2)=(0\,, 4\,p(x))$};
        \node (b2) at (a2) {$(u_3,v_3)=(0\,,-(f(x)+2\,p'(x)+2\,p^2(x)))$};
        \node (b3) at (a3) {$\hspace*{-20ex}(u_{4},v_{4})=(0\,,f'(x)+4\,p(x)p'(x)+2\,p''(x))$};
        \draw[<-,thick] (b0) -- (b1);
        \draw[<-,thick] (b1) -- (b2);
        \draw[<-,thick] (b2) -- (b3);
        \end{scope} 
    \begin{scope} 
        \path (0,-3) coordinate (a0) -- ++(3.2,0)  coordinate (a1) -- ++(5,0)  coordinate (a2) -- ++(0,-1)  coordinate (a3);
        \node (b0) at (a0) {$(U_0,V_0)=(0\,, 0)$};
        \node (b1) at (a1) {$(U_5,V_5)=(0\,, 0)$};
        \node (b2) at (a2) {$(U_6,V_6)=(2\,p'(x)-f(x)-2\,p^2(x),0)$};
        \node (b3) at (a3) {$\hspace*{-20ex}(U_{7},V_{7})=(-2 (f'(x)+2 f(x) p(x)-2 p''(x)+4 p^3(x) )\,,0)$};
        \draw[<-,thick] (b0) -- (b1);
        \draw[<-,thick] (b1) -- (b2);
        \draw[<-,thick] (b2) -- (b3);
    \end{scope} 
    \end{tikzpicture}}}
\end{equation}
The conditions for the system to be regularised at the end of the second and third cascades of blow-ups are 
\vspace*{-2ex}
\begin{align}
2\,pf'+2fp'+12\,p^2 p'+f''-2\,p'''&=0\,,\\[1ex]
-2\,pf'-2fp'-12\,p^2 p'+f''+2\,p'''&=0\,,
\end{align}
and their sum and difference give 
\vspace*{-2ex}
\begin{align}
f''&=0\,,\\[1ex]
pf'+fp'+6\,p^2p'-p'''&=0\,,
\end{align}
respectively. The last identity  can be easily integrated once, and we find the conditions stated in~\eqref{eq:syst_13} for $f(x)$ and $p(x)$.

We construct the specific configuration of the irreducible components of the inaccessible divisor, and we find a surface type with Dynkin diagram $E_7^{(1)}$ in~\eqref{eq:diag_E7}, i.e.\
\vspace*{-1ex}
\begin{equation}\label{eq:syst_13_diagram}
    \includegraphics[width=.47\textwidth,valign=c]{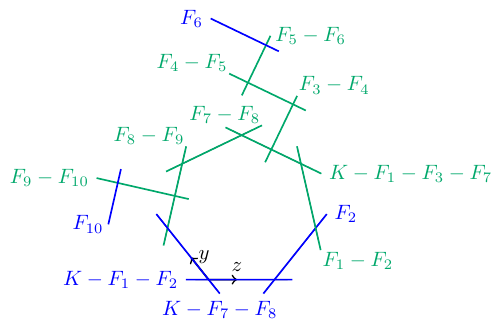} \hspace{10ex} 
    \begin{array}{l} 
       \delta_1 = \mathcal{F}_1 - \mathcal{F}_2  \\[.5ex]
       \delta_2 =  \mathcal{F}_9 - \mathcal{F}_{10}  \\[.5ex]
       \delta_3 =  \mathcal{F}_8 - \mathcal{F}_{9}  \\[.5ex]
       \delta_4 =  \mathcal{F}_7 - \mathcal{F}_{8}   \\[.5ex]
       \delta_5 =    \mathcal{K}- \mathcal{F}_1 - \mathcal{F}_3 - \mathcal{F}_7  \\[.5ex]
       \delta_6 =   \mathcal{F}_3 - \mathcal{F}_4   \\[.5ex]
       \delta_7 =    \mathcal{F}_4 - \mathcal{F}_5 \\[.5ex]
       \delta_8 =   \mathcal{F}_5 - \mathcal{F}_6    \\[.5ex]
    \end{array}
\end{equation}
\begin{theorem}\label{thm:syst_13_93}
Let $(z_{13},y_{13})$ satisfy the system XIII in~\eqref{eq:syst_13} -- with $f(x)=x$ and $p(x)$  $\text{P}_{\text{II}}$ transcendent -- and $(z_{13},y_{13})$ satisfy the system IX.B(3) in\eqref{eq:syst_93}. Then, the systems are related by the birational transformation
\vspace*{-2ex}
\begin{equation}\label{eq:syst_13_93}
    y_{13}=2p-2y_{93},\qquad z_{13}=\frac{2(x+2\,y_{93}^2-z_{93})+2p'-2p^2-x}{2p-2y_{93}}
\end{equation}

\vspace*{-1ex}

\noindent 
and
\begin{equation}\label{eq:syst_93_13}
y_{93}=\frac{1}{2}(2p-y_{13}),\qquad z_{93}=\frac{1}{2}(y_{13}^2 -y_{13}\,z_{13}-4p\, y_{13}+x+2p^2+2p').
\end{equation}

\end{theorem}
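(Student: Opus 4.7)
The plan is to mirror the strategy used for the previous theorems: exploit the common $E_7^{(1)}$ surface type to match divisor classes, read off the classes of the axes in the target coordinates, write the most general polynomial ansatz compatible with those classes, pin down the coefficients by forcing the ansatz to meet the appropriate base points, and finally fix the remaining coefficient functions by imposing that the transformed variables satisfy the original system. A second proof will be obtained through the iterative polynomial regularisation, which is why the pair (IX.B(3), XIII) is mentioned together with (IX.B(2), IX.B(5)) at the end of Section~\ref{sec:geom_iter}.

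First I would set up the bijection between the Picard lattices. Comparing the configurations~\eqref{eq:syst_93_diagram} and~\eqref{eq:syst_13_diagram} node by node -- using the convention for the labelling in~\eqref{eq:diag_E7}, and keeping in mind the ambiguity in the choice of $\delta_2$ and $\delta_8$ due to the symmetry of $E_7^{(1)}$ -- yields a correspondence of the form $(\mathcal{H}, \mathcal{E}_1, \dots, \mathcal{E}_9) \leftrightarrow (\mathcal{K}, \mathcal{F}_1, \dots, \mathcal{F}_{10})$ modulo a fixed number of extra blow-ups added to system IX.B(3) to match the count on the XIII side. From this correspondence one reads off the classes of the two coordinate axes $\{z_{93}=0\}$ and $\{y_{93}=0\}$ expressed in terms of the classes $(\mathcal{K}, \mathcal{F}_i)$ of system XIII. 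Based on the form of the statement~\eqref{eq:syst_93_13}, I expect $\{z_{93}=0\}$ to be a line (class $\mathcal{K} - \mathcal{F}_\ast$) and $\{y_{93}=0\}$ to be a conic (class $2\mathcal{K} - \mathcal{F}_\ast - \mathcal{F}_\ast - \cdots$) in the variables $(z_{13}, y_{13})$.

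Next, I would write the corresponding generic linear and conic ansatz
\begin{align*}
y_{93} &= a_{10}\, z_{13} + a_{01}\, y_{13} + a_{00}\,, \\
z_{93} &= b_{20}\, z_{13}^2 + b_{11}\, z_{13}\, y_{13} + b_{02}\, y_{13}^2 + b_{10}\, z_{13} + b_{01}\, y_{13} + b_{00}\,,
\end{align*}
with coefficient functions depending on $x$ (and on the $\text{P}_{\text{II}}$ transcendent $p$). I would then rewrite these expressions in the charts $(u_i, v_i)$, $(U_i, V_i)$ appearing in the cascades~\eqref{eq:syst_13_cascades}, and impose that the line passes through the single base point listed in the class of $\{z_{93}=0\}$ and that the conic passes through all base points listed in the class of $\{y_{93}=0\}$. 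As in the previous theorems, this cuts down most of the $a_{ij}, b_{ij}$ to zero or to multiples of a single remaining coefficient.

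The last step is to substitute the resulting reduced ansatz into the system IX.B(3)~\eqref{eq:syst_93} and demand that $(z_{13}, y_{13})$ evolve according to system XIII~\eqref{eq:syst_13}. Collecting monomials in $(z_{13}, y_{13})$ yields a system of ODEs for the remaining coefficient functions, whose solution must use the defining relations $f''=0$ and $(p''-2p^3-fp)'=0$ from~\eqref{eq:syst_13}; with $f(x)=x$ and $p$ the $\text{P}_{\text{II}}$ transcendent, these relations reduce to identities along solutions and allow one to read off the explicit coefficients producing exactly~\eqref{eq:syst_93_13}. Inverting this polynomial map then gives the rational expressions~\eqref{eq:syst_13_93}. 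I expect this final step -- disentangling the ODE constraints on the coefficients while consistently using the $\text{P}_{\text{II}}$ equation satisfied by $p$ -- to be the main obstacle; the rest is bookkeeping on the lattice.

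Finally, as an independent verification and to match the pattern of Theorem 3.3.1, I would give a second proof via the iterative polynomial regularisation. Starting from system XIII and following one of the cascades in~\eqref{eq:syst_13_cascades}, one lands in a final affine chart with a polynomial system; iterating the regularisation on that system (choosing a suitable compactification and blowing up the new indeterminacies) should produce, after an affine rescaling involving $p(x)$, exactly the Okamoto form IX.B(3), thus recovering~\eqref{eq:syst_13_93}--\eqref{eq:syst_93_13}.
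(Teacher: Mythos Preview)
Your plan is correct and matches the paper's proof essentially step for step: the paper also adds one extra blow-up on the IX.B(3) side, sets up the $(\mathcal{H},\mathcal{E}_i)\leftrightarrow(\mathcal{K},\mathcal{F}_i)$ dictionary, finds $\{z_{93}=0\}\sim\mathcal{K}-\mathcal{F}_1$ (a line) and $\{y_{93}=0\}\sim 2\mathcal{K}-\mathcal{F}_1-\mathcal{F}_2-\mathcal{F}_7-\mathcal{F}_8-\mathcal{F}_9$ (a conic through five of the base points in~\eqref{eq:syst_13_cascades}), reduces the ansatz, and fixes the residual coefficients by plugging into IX.B(3). The second proof via iterative regularisation also follows your outline, with the intermediate $(w,t)$ system~\eqref{eq:interm_wt_93} appearing after one further round of blow-ups and being matched to IX.B(3) by the affine rescaling $w=2p-2y_{93}$, $t=2(x+2y_{93}^2-z_{93})$.
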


\begin{proof}
    We prove the result via the geometric approach and the iterative polynomial regularisation. 

In order to compare the two intersection diagrams in~\eqref{eq:syst_93_diagram} and~\eqref{eq:syst_13_diagram} we need to operate with an additional blow-up in the system IX.B(3) so that we can compare the same number of elements. We establish the following correspondence 
\begin{equation*}
    \begin{array}{l l}
        \mathcal{E}_1 = \mathcal{K} - \mathcal{F}_2 - \mathcal{F}_7  &   \mathcal{F}_1 = \mathcal{H} - \mathcal{E}_2  -\mathcal{E}_{10}  \\
        \mathcal{E}_2 =  \mathcal{K} - \mathcal{F}_1 - \mathcal{F}_7 &    \mathcal{F}_2 = \mathcal{H} - \mathcal{E}_1  -\mathcal{E}_{10}\\
        \mathcal{E}_3 = \mathcal{F}_3 &  \mathcal{F}_3 = \mathcal{E}_3   \\
        \mathcal{E}_4 = \mathcal{F}_4 &  \mathcal{F}_4 = \mathcal{E}_4 \\
        \mathcal{E}_5 = \mathcal{F}_5 & \mathcal{F}_5 = \mathcal{E}_5 \\
        \mathcal{E}_6 = \mathcal{F}_6 & \mathcal{F}_6 = \mathcal{E}_6  \\
        \mathcal{E}_7 = \mathcal{F}_8 & \mathcal{F}_7 = \mathcal{H} - \mathcal{E}_1  -\mathcal{E}_{2} \\
        \mathcal{E}_8 = \mathcal{F}_{9} & \mathcal{F}_8 =\mathcal{E}_7 \\
        \mathcal{E}_9 = \mathcal{F}_{10} & \mathcal{F}_9 = \mathcal{E}_8\\
        \mathcal{E}_{10} = \mathcal{K} - \mathcal{F}_1 - \mathcal{F}_2 &  \mathcal{F}_{10} = \mathcal{E}_9 \\
        \mathcal{H} = 2\,\mathcal{K} - \mathcal{F}_1 - \mathcal{F}_2 - \mathcal{F}_7 \hspace{10ex} & \mathcal{K} = 2\,\mathcal{H} - \mathcal{E}_1 - \mathcal{E}_2 -\mathcal{E}_{10} 
    \end{array}
\end{equation*}
The expressions for the axes $y_{93}$ and $z_{93}$ are
\begin{equation}
\begin{split} 
    y_{93} &\colon \{z_{93}=0\} \qquad \mathcal{H}-\mathcal{E}_1 = \mathcal{K} - \mathcal{F}_1\\[1ex]
    z_{93} &\colon \{y_{93}=0\} \qquad \mathcal{H}-\mathcal{E}_7 - \mathcal{E}_8 = 2\,\mathcal{K} - \mathcal{F}_1 - \mathcal{F}_2 - \mathcal{F}_7 - \mathcal{F}_8 - \mathcal{F}_9 
\end{split} 
\end{equation}
hence a linear function for $y_{93}$ going through the point $(u_0,v_0)=(0,0)$ in~\eqref{eq:syst_13_cascades} and a conic for $z_{93}$ going through the points with coordinates $(u_0,v_0)=(0,0)$, $(u_1,v_1)=(0,0)$, $(u_0,v_0)=(0,1)$, $(u_2,v_2)=(0,4\,p)$ and $(u_3,v_3)=(0,x-2\,p'-2\,p^2)$ in~\eqref{eq:syst_13_cascades}. Imposing the constraints, we fix some of the coefficient functions in the generic expressions for the transformation of coordinates, i.e.
\begin{equation}\label{eq:interm_13_93}
    \begin{split}
        y_{93}&= a_{01}\,y_{13} + a_{00} \,,\\[1ex]
        z_{93}&= b_{11}\left( -y_{13}^2+y_{13}\,z_{13} +4\,p\,y_{13} -x-2p'-2p^2 \right)\,.
    \end{split}
\end{equation}
Lastly, requiring that the transformed variables satisfy the system IX.B(3) we can specify the remaining coefficient functions, being
\vspace*{-1ex}
\begin{equation}
    a_{01} = b_{11}\,, \qquad  2\,a_{01}(a_{00}+p)+a_{01}'=4 \,b_{11}\,p \,, \qquad b_{11}=-\frac{1}{2}\,,
\end{equation}
and substituting these into the~\eqref{eq:interm_13_93} we obtain~\eqref{eq:syst_93_13}. 

We now consider the iterative regularisation for the system XIII, by first tracking back the changes of variables for the regularised system in the most suitable chart. Identifying with $(u,v)$ the variables of the final chart, we establish the regularising birational transformation 
\vspace*{-1ex}
\begin{equation}\label{eq:syst_13_iter}
    y_{13} = \frac{1}{u}\,, \qquad z_{13} = u\! \left(2p'-2p^2-f+u(u v-4fp-8p^3-2f'+4p'')\right)\,.
\end{equation}
We promote the system in the final affine chart $(u,v)$ to be the new system to be regularised. This system is regularised after two blow-ups, with final system in the variables $(w,t)$ being
\vspace*{-1ex}
\begin{equation}\label{eq:interm_wt_93}
    \begin{cases}
        w'=\frac{1}{2}(4p w+2\,t-w^2+4\,p'-4\,p^2-2f)\,,\\[1ex]
        t'=wt-2pt+f'-2p''-2fp+4\,p^3\,,
    \end{cases}
\end{equation}
and same resonance condition as the previous step acting on $f$ and $p$. The birational transformation connecting the system in $(u,v)$ and the system in $(w,t)$ is 
\begin{equation}
 u=\frac{1}{w}\,,\qquad v=w(4fp+8\,p^3+wt+2f'-4\,p'').
\end{equation}
Combining the two birational transformations we get 
\begin{equation}\label{eq:Bu_transf_1}
y_{13}=w,\qquad z_{13}=\frac{t+2p'-2p^2-f}{w}\,,
\end{equation}
and 
\begin{equation}\label{eq:Bu_transf_2}
w=y_{13},\qquad t=f+2p^2+ y_{13}\, z_{13} -2p'.
\end{equation}
Finally, assuming that $f(x)=x$ and $p(x)$ satisfies $\text{P}_{\text{II}}$ in~\eqref{eq:P2} with parameter $\alpha$, taking  
\begin{equation}
    w=2p-2y_{93},\qquad t=2(x+2y_{93}^2-z_{93})
\end{equation}
yields the birational transformation~\eqref{eq:syst_13_93} and after the inversion~\eqref{eq:syst_93_13}. 
\end{proof}

Thanks to the evident symmetry of the diagram for $E_7^{(1)}$ it is possible to adopt a different choice for the matching of the components of the inaccessible divisor when comparing~\eqref{eq:syst_13_diagram} with~\eqref{eq:syst_93_diagram} by fixing $\mathcal{E}_6=\mathcal{F}_{10}$ instead of $\mathcal{E}_6=\mathcal{F}_6$. 

In this case, we get the following relations 
\begin{equation}
\begin{split} 
    y_{93} &\colon \{z_{93}=0\} \qquad \mathcal{H}-\mathcal{E}_1 = \mathcal{K} - \mathcal{F}_1\\[1ex]
    z_{93} &\colon \{y_{93}=0\} \qquad \mathcal{H}-\mathcal{E}_7 - \mathcal{E}_8 = 2\,\mathcal{K} - \mathcal{F}_1 - \mathcal{F}_2 - \mathcal{F}_3 - \mathcal{F}_4 - \mathcal{F}_5 
\end{split} 
\end{equation}
with the conic going through the points $(u_0,v_0)=(0,0)$, $(u_1,v_1)=(0,0)$, $(U_0,V_0)=(0,0)$, $(U_5,V_5)=(0,0)$ and $(U_6,V_6)=(2\,p'(x)-f(x)-2\,p^2(x),0)$ in~\eqref{eq:syst_13_cascades}. After imposing all the constraints, the birational transformation between the coordinates takes the form:
\begin{equation}\label{eq:systm_93_13_bis1}
    y_{93} = \frac{y_{13}}{2}-p \,, \qquad z_{93}= \frac{1}{2}(y_{13}\,z_{13} +x+2p^2-2\,p')\,, 
\end{equation}
and inverse transformation
\begin{equation}\label{eq:systm_93_13_bis2}
    y_{13}= 2( y_{93}- p)\,,\qquad z_{13}=  \frac{2 z_{93}-x-2 p^2+2 p'}{2(y_{93}-p)}\,. 
\end{equation}
The transformations \eqref{eq:systm_93_13_bis1} and \eqref{eq:systm_93_13_bis2} hold provided that the function $p(x)$ satisfies $\text{P}_{\text{II}}$ in~\eqref{eq:P2} with $\alpha \mapsto - \alpha$. Alternatively, if we assume that $p(x)$ satisfies $\text{P}_{\text{II}}$ in~\eqref{eq:P2}, the formulas hold if $\alpha=0$.

It is worth noticing that the expressions for the birational transformations in Theorem~\ref{thm:syst_13_93} are still valid if the function $p(x)$ satisfies the Riccati equation, or it is a rational solution for some specific value of the parameter $\alpha$ in $\text{P}_{\text{II}}$. 

{
As mentioned above, the system XIII is non-Hamiltonian with respect to the standard $2$-form $dz_{13} \wedge dy_{13}$, however, we can determine whether it is Hamiltonian with respect to the pull back induced by the change of variables~\eqref{eq:systm_93_13_bis1}. We consider the $2$-form 
\begin{equation}
    dz_{93} \wedge dy_{93} = \left( \frac{\partial y_{93}}{\partial z_{13}}\,\frac{\partial z_{93}}{\partial y_{13}} - \frac{\partial y_{93}}{\partial y_{13}}\,\frac{\partial z_{93}}{\partial z_{13}} \right) dz_{13} \wedge dy_{13} = -\frac{y_{13}}{2}\,dz_{13} \wedge dy_{13} \,, 
\end{equation}
and look for a function $H_{\text{Bu13}}$ such that the system XIII~\eqref{eq:syst_13} can be written in the following form
\begin{equation}
    \begin{cases}
         y_{13}'= -\dfrac{2}{y_{13}}\,\dfrac{\partial H_{\text{Bu13}}}{\partial z_{13}}\,, \\[2ex]
        z_{13}'= \dfrac{2}{y_{13}}\,\dfrac{\partial H_{\text{Bu13}}}{\partial y_{13}}\,. 
    \end{cases}
\end{equation}
The function $H_{\text{Bu13}}$ exists and it reads as 
\vspace*{-4ex}
\begin{equation}\label{eq:Ham_Bu13}
    H_{\text{Bu13}}(z,y) = \frac{y^2}{2}  \left(p^2-p'-p\, z +\frac{y\, z - z^2+x}{2}  \right)+ g(x)
\hspace{10ex} \includegraphics[width=.17\textwidth,valign=c]{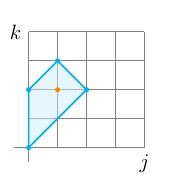} 
\end{equation}
\vspace*{-3ex}

\noindent 
with $g$ function of $x$ only. The Hamiltonian function is a genus $1$ algebraic curve and its associated Newton polygon has an area equal to $3$. With the function $g(x)=0$, the Newton polygon would have a diamond shape and the corresponding area equal to $2$.  
}

Finally, we introduce a new Hamiltonian system birationally equivalent to XIII.  After the first regularisation process, the systems in both the final affine charts is Hamiltonian, and the algebraic curve representing the Hamiltonian is of genus 2. Following the proof of the Theorem~\ref{thm:syst_13_93}, the iteration of the regularisation procedure yields the system~\eqref{eq:interm_wt_93}. This system is Hamiltonian and the algebraic curve is of genus 1 
\vspace*{-4ex}
\begin{equation}\label{eq:HamBu}
\begin{aligned} 
    H_{\text{Bu13}}^1(w,t)&=- w \left(4 p^3+2 f p-2 p''+f'\right)+2 p \,w t \\[1ex]
    &~~~~ -t\left(2 p^2 +2 p' +f + \frac{1}{2}\right) -
   \frac{w^2 t}{2}+\frac{t^2}{2}
   \end{aligned} 
   \hspace{10ex} \includegraphics[width=.17\textwidth,valign=c]{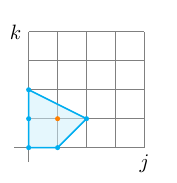} 
\end{equation}
\vspace*{-2ex}

\noindent
We label this function Bu13 to emphasise the connection to the system XIII, and introduce the additional upper index $1$.

The following holds true 
\begin{theorem}
The system XIII is birationally equivalent to the system with the Hamiltonian~\eqref{eq:HamBu} of genus 1 by the birational transformations~\eqref{eq:Bu_transf_1} and~\eqref{eq:Bu_transf_2}. 
\end{theorem}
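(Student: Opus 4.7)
The plan is to build on the iterative polynomial regularisation already carried out inside the proof of Theorem~\ref{thm:syst_13_93}. There, the non-Hamiltonian system XIII was transformed in two stages --- first the blow-up chain producing the system in variables $(u,v)$, then a further single blow-up producing the system in $(w,t)$ displayed in~\eqref{eq:interm_wt_93} --- and the composition of those two stages delivered exactly the birational maps~\eqref{eq:Bu_transf_1} and~\eqref{eq:Bu_transf_2} connecting $(y_{13},z_{13})$ with $(w,t)$. Since these are invertible polynomial maps in one direction and rational (in fact monomial in the leading variable) in the other, the birational equivalence claim of the theorem is already in hand; the only items left to check are the Hamiltonian structure of~\eqref{eq:interm_wt_93} and the genus of the associated Newton polygon.

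For the Hamiltonian structure I would compute the partial derivatives $\partial H_{\text{Bu}}^{1}/\partial t$ and $\partial H_{\text{Bu}}^{1}/\partial w$ directly from the cubic polynomial in~\eqref{eq:HamBu}, using the convention adopted for $H(z,y)$ throughout the paper (i.e. treating $w$ as playing the role of $z$ and $t$ of $y$, so that $w'=\partial H/\partial t$ and $t'=-\partial H/\partial w$). Expanding these partials and collecting the purely $x$-dependent coefficients, one matches them term by term with the right-hand sides of~\eqref{eq:interm_wt_93}; the coefficient functions in $p$, $p'$, $p''$, $f$ and $f'$ agree because they are exactly those enforced by the resonance condition at the last blow-up of the second cascade in~\eqref{eq:syst_13_cascades}, which was already imposed when deriving~\eqref{eq:interm_wt_93}.

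For the genus, I would list the monomials that appear with nonzero coefficient in $H_{\text{Bu}}^{1}(w,t)$, namely the constant term $2p'$ and the terms in $w$, $t$, $wt$, $w^{2}t$ and $t^{2}$. The corresponding lattice points are $(0,0),\,(1,0),\,(0,1),\,(1,1),\,(2,1),\,(0,2)$, whose convex hull is the quadrilateral with vertices $(0,0),\,(1,0),\,(2,1),\,(0,2)$. A direct count shows that the only strictly interior integer lattice point is $(1,1)$, so by the definition recalled around~\eqref{eq:ham_example} the genus equals $1$, in agreement with the Newton polygon drawn alongside~\eqref{eq:HamBu}.

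The main obstacle is essentially bookkeeping: the Hamiltonian verification involves several functions $p,p',p'',f,f'$ and any sign slip in one of them would break the equality $w'=\partial H/\partial t$, $t'=-\partial H/\partial w$. No new geometric or analytic ingredients are needed beyond those already used in Theorem~\ref{thm:syst_13_93}; the whole statement follows by assembling that earlier computation with the Hamiltonian check and the Newton polygon count.
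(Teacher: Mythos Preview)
Your proposal is correct and follows the paper's own approach: the proof there is the single line ``The proof is given by iterative regularisation as in Theorem~\ref{thm:syst_13_93},'' so your plan of invoking the already-established composition of regularising maps from that theorem, together with the direct Hamiltonian and Newton-polygon checks, is exactly what is intended (and in fact more explicit than the paper). One small slip: the passage from $(u,v)$ to $(w,t)$ in the paper is obtained after two blow-ups, not one, but this does not affect the argument.
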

\begin{proof}
    The proof is given by iterative regularisation as in Theorem~\ref{thm:syst_13_93}.
\end{proof}
The shape of the Newton polygon for the Hamiltonian $H_{\text{Bu13}}^1$ suggests the following change of variables in system~\eqref{eq:interm_wt_93}: $w\to 2p-2w$ and $t\to t+f-2\alpha x.$ Assuming that $p(x)$ satisfies $p''=2p^3+f(x)p+\alpha$ we obtain a quadratic system
 \begin{equation}
 \begin{cases}
     w'=\alpha x+w^2 - \dfrac{t}{2}\,,\\[1ex]
    t'=-2w(t+f(x)-2\alpha x)\,,
 \end{cases}
 \end{equation} 
which is Hamiltonian with the Hamiltonian 
\vspace*{-4ex}
\begin{equation} 
H_{\text{Bu13}}^2(w,t)=-2\alpha x\, w^2 +f(x)w^2 +\alpha x\, t+w^2 t-\frac{t^2}{4}\hspace{10ex} \includegraphics[width=.17\textwidth,valign=c]{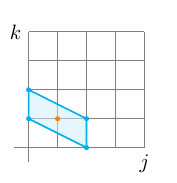} 
\end{equation}
\vspace*{-3ex}

\noindent
and which gives the second Painlev\'e equation with parameter $\alpha$ and $f(x)=x$ for the function $w$. The Newton polygon for the Hamiltonian $H_{\text{Bu13}}^2$ is parallelogram-shaped  which is different from the Newton polygon of the  Okamoto Hamiltonian. However, they both have same genus (equal to 1) and same area (equal to 2).

\section{Discussion and open questions}\label{sec:disc}

The classification problem of systems sharing a particular property is generally a very complicated task. In this paper, we showed how to find birational transformations between certain Bureau-Guillot systems from~\cite{Guillot} by using the geometric approach and the iterative polynomial regularisation. This solves the so-called Painlev\'e equivalence problem for the Bureau-Guillot systems underpinning $\text{P}_{\text{I}}$ and $\text{P}_{\text{II}}$ and substantially complements and provides a justification for the results in \cite{Guillot}. There are several questions that arise after our analysis, possible object of future works. 

In Section~\ref{sec:P1} we examined  the selected systems which are all related to $\text{P}_{\text{I}}$. It might be interesting to deepen the study of the elliptic version of the systems as well, and we expect to find a geometric description analogous to that analysed in this paper. 

{
In a forthcoming paper we propose the study of new systems where other Painlev\'e transcendents (from $\text{P}_{\text{III}}$ to $\text{P}_{\text{VI}}$) appear at the level of the coefficients, and we conjecture that it should be possible to construct analogous systems for the quasi-Painlev\'e case, which maybe can arise in systems with higher degree polynomials or rational functions. 
}

We briefly mentioned the use of the Newton polygons for the polynomial Hamiltonians, without going into too much detail. This construction seems to carry very useful auxiliary information about the systems and possibly it may play a crucial role while solving the Painlev\'e or quasi-Painlev\'e equivalence problem.  We examined this in the paper \cite{new} for certain $\text{P}_{\text{IV}}$-type equations. Most of the systems that we considered in this work are non-Hamiltonian {with respect to the standard $2$-form}, and it is then natural to ask whether it is possible to introduce Newton polytopes for general systems in a similar way as the Newton polygons are introduced for Hamiltonian systems. For generic non-Hamiltonian systems one can try to associate a Newton polytope in a different way. For instance, one can reduce a system to a scalar equation for one of the variables, or find the degrees of functions of all terms in the right-hand side and then take the convex hull of the Minkowski sum. Another possibility is similar to what is proposed in~\cite{Chiba}, adapted to our case since the coefficient functions are not rational in $x$: for a monomial $y^n z^{m}$ in the first equation one associates the point $(n-1,m)$ and  the point $(n,m-1)$ in the second equation and then taking the convex hull. The genus will then be possibly defined as discussed in \cite{MDTK2}. Such a definition corroborates that for the Newton polygons if the system is in addition Hamiltonian. The Newton polytopes so defined should carry useful information and be useful in the classification problem of generic polynomial systems, which is in general very difficult due to many subcases to consider. However, as we can see for the systems associated with $\text{P}_{\text{I}}$ in this paper, we observe different genera so determined, and so the question of the minimal genus and minimal area of the Newton polytope for the Painlev\'e equations (or birationally equivalent systems in general) arises. For instance, from systems IX.B(2)~\eqref{eq:syst_92} and XIV~\eqref{eq:syst_14} we have genus 0 for a Newton polytope associated with the system using the modified Chiba's definition, with area equal to 1. For the Hamiltonian system~\eqref{eq:syst_5} we have genus 1 and area equal to 2. For $\text{P}_{\text{II}}$ we have genus 1 from the Hamiltonian system~\eqref{eq:syst_93} and the area equal to 2. {We stress that despite being non-Hamiltonian with respect to the standard $2$-form, some of the systems under study in this work are Hamiltonian with respect to the pull back of the $2$-form induced by some changes of variables, i.e.\ system IX.B(2) and system XIII. In these cases, it is possible to introduce a suitable polynomial Hamiltonian function and then refer to the usual notion of Newton polygons, as in~\eqref{eq:Ham_Bu92} and~\eqref{eq:Ham_Bu13} respectively. }

In Section~\ref{sec:P2} with the method of the iterative polynomial regularisation, we encountered Hamiltonians of genus 2. We might ask whether these genus 2 Hamiltonians indicate a sort of ``Riccati extension'' or some other differential extension of the Painlev\'e equation in the specific case of quadratic systems. However, some genus 2 Hamiltonians also appear in relation to quartic Hamiltonians for the quasi-Painlev\'e case, as studied in~\cite{MDTK2}, therefore the meaning of genus 2 Hamiltonian needs to be clarified. 

Another important observation made in Section~\ref{sec:P2} concerns the possibility of obtaining a Hamiltonian system via the iterative regularisation starting from a system that is non-Hamiltonian. We can ask whether starting from a non-Hamiltonian system with some Painlev\'e transcendents hidden in the coefficients is it possible to generate a Hamiltonian system for the same Painlev\'e transcendent, and if so whether there is a way to predict this phenomenon. 

Other relevant aspects concern the integrability and the discrete version of these systems. It is not currently known whether the Bureau-Guillot systems here studied and possessing the Painlev\'e property, can be described by a suitable Lax pair, or if they can be put in relation to the systems studied in~\cite{Pickering}. It is also not known whether it is possible to construct a discrete analogue of these systems, such that the Painlev\'e property is hidden in some coefficients or appears as a regularisation condition. Being the systems under scrutiny quadratic, one of the natural choices would be to use the Kahan-Hirota-Kimura discretization \cite{Kahan1993,KahanLi1997,HirotaKimura2001,KimuraHirota2001}. 
This method is known for preserving the integrability of underlying continuous systems in many cases \cite{PetreraPfadlerSuris2009,CelledoniMcLachlanOwrenQuispel2013,CelledoniMcLachlanOwrenQuispel2014}.
Specifically, we note that a comprehensive geometric explanation of this phenomenon in the planar case has been provided by~\cite{PSS2019,GMcLQ_dummies}.



Finally, as already mentioned in the introduction, the Bureau-Guillot systems constitute some very interesting examples for the development of value distribution theory. The coefficient functions of most of these systems are meromorphic functions and are related to the Painlev\'e transcendents. The study of such systems is challenging from the point of view of the Nevanlinna theory. This is examined in  \cite{ECGF} in detail.

\section*{Acknowledgements}
The authors would like to thank G. Gubbiotti {and A. Stokes} for interesting insights and discussion. {The authors are also grateful to A. Guillot for sharing the revised version of the manuscript \cite{Guillot}.} The work of GF is part of the project ``ERDF A way of making Europe'', and the project PID2021-124472NB-I00 funded by MICIU/AEI/  10.13039/501100011033.

\section*{Statement \& Declaration}
The work of GF is part of the project ``ERDF A way of making Europe'', and the project PID2021-124472NB-I00 funded by MICIU/AEI/  10.13039/501100011033. 
The authors have no relevant financial or non-financial interests to disclose.
All the authors contributed equally to the study conception and design of the paper.  

No new data were created or analysed in this study. Data sharing is not applicable to this article.

\end{document}